\DeclareMathOperator{\dist}{dist}
\DeclareMathOperator{\out}{out}
\DeclareMathOperator{\inedge}{in}
\DeclareMathOperator{\MWST}{MWST}
\newtheorem{problem}{Problem}
\title{Exploring Temporal Graphs with Frequent and Regular Edges}
\address{School of Computer Science, University of St Andrews, United Kingdom}
\author{
  Duncan Adamson\\
  School of Computer Science\\University of St Andrews\\KY16 9SX\\United Kingdom\\
  duncan.adamson{@}st-andrews.ac.uk
}
\begin{document}

\maketitle

\begin{abstract}
Temporal graphs are a class of graphs defined by a constant set of vertices and a changing set of edges, each of which is known as a timestep. These graphs are well motivated in modelling real-world networks, where connections may change over time. One such example, itself the primary motivation for this paper, are public transport networks, where vertices represent stops and edges the connections available at some given time. Exploration problems are one of the most studied problems for temporal graphs, asking if an agent starting at some given vertex $v$ can visit every vertex in the graph.

In this paper, we study two primary classes of temporal graphs. First, we study temporal graphs with \emph{frequent edges}, temporal graphs where each edge $e$ is active at least once every $f_e$ timesteps, called the frequency of the edge.  Second, temporal graphs with \emph{regular edges}, graphs where each edge $e$ is active at any timestep $t$ where $t \equiv s_e \bmod r_e$, with $s_e$ being the start time of the edge, and $r_e$ the regularity.

We show that graphs with frequent edges can be explored in $O(F n)$ timesteps, where $F = \max_{e \in E} f_e$, and that graphs with regular edges can be explored in $O(R n)$ timesteps, where $R = \max_{e \in E} r_e$. We provide additional results for \emph{public transport graphs}, temporal graphs formed by the union of several routes, corresponding to the schedules of some modes of transit, for \emph{sequential connection graphs}, temporal graphs in which each vertex has a single active in-edge per timestep, iterating over the set of edges in some order, and for \emph{broadcast networks}, a representation of communication within distributed networks where each vertex broadcasts a message either to all vertices, or none at each timestep.
\looseness=-1

\end{abstract}

\begin{keywords}
Temporal Graphs, Exploration, Graph Traversal
\end{keywords}

\newpage




\section{Introduction}

In many real world settings, networks are not static objects but instead have unstable connections that vary with time. Such examples include public transport networks \cite{kutner2025better} and infection control \cite{ruget2021multi}. Temporal graphs provide a model for such time-varying networks. Formally, a \emph{temporal graph $\mathcal{G}$} is a generalisation of (static) graphs containing a common vertex set $V$ and ordered sequence of $T$ edge sets $(E_1, E_2, \dots, E_T)$, each called a \emph{timestep} (also known as a \emph{snapshot}), with the \emph{lifetime} of the graph defined by the number of edge sets. Unlike some models of dynamic graphs, we assume that we have full knowledge of the graph, i.e. that we are given every edge set as part of the input.

Much work on temporal graphs has focused on the problem of \emph{exploration} of temporal graphs. In this problem, some agent (or set of agents) must visit every vertex in a temporal graph. The agent is restricted in being allowed only a single move per timestep, thus as the graph changes, the agent movement may become restricted. This was motivated by extending the well known \textsc{Travelling Salesman} to temporal graphs by Michail and Spirakis \cite{michail2016traveling}. Since then, there has been a broad span of work covering reachability \cite{kutner2025better,DeligkasP22,EMMZ21,meeks2022reducing} and exploration \cite{michail2016traveling,adamson2022faster,arrighi2023kernelizing,dogeas2023exploiting,ERLEBACH2021,erlebach2019two,ErlebachKDefficent,erlebach2023parameterised}.

The decision version of \textsc{TEXP} in which one has to decide if at least one exploration schedule exists in a given temporal graph from a given starting vertex is an \textbf{NP}-complete problem \cite{michail2016traveling}. Indeed, this problem remains \textbf{NP}-complete even if the underlying graph has pathwidth 2 and every snapshot is a tree \cite{bodlaender2019exploring}, or if the underlying graph is a star and the exploration has to start and end at the center of the star \cite{akrida2021temporal}.

In the optimisation version of this problem, we ask for the \emph{quickest} exploration of a given temporal graph, that being the schedule ending at the earliest timestep, formally defied in Section \ref{sec:prelims}.

On the positive side, we mention the landmark paper by Erlebach et. al. \cite{ERLEBACH2021}, who show that any \emph{always connected} temporal graphs (temporal graphs where each timestep is connected) can be explored in $O(n^2)$ timesteps, where $n$ is the number of vertices. For graphs where the \emph{underlying graph} (the static graph corresponding to the union over all edge sets) is planar,  they reduce this to $O(n^{1.8} \log n)$ timesteps, strengthened by Adamson et al. \cite{adamson2022faster} to $O(n^{1.75} \log n)$ timesteps. If the underlying graph has treewidth at most $k$ then the temporal graph can be explored in $O(k^{1.5} n^{1.5} \log n)$ timesteps, later strengthened to $O(k n^{1.5} \log n)$ timesteps \cite{adamson2022faster}. If the underlying graph is a $2 \times n$ grid then the temporal graph can be explored in $O(n \log^3 n)$ timesteps. Finally that if the underlying graph is a cycle or a cycle with a single chord, then the temporal graph can be explored in $O(n)$ timesteps, later generalised by Alamouti \cite{taghian2020exploring} to exploring temporal graphs with an underlying graph as a cycle with $k$-chords, giving an $O(k^2 k! e^k n)$ upper bound on exploration, strengthed by Adamson et al. \cite{adamson2022faster} to an upper bound of $O(kn)$.
%
%
%
%
%
Erlebach and Spooner \cite{ErlebachKDefficent} showed that \emph{$k$-edge-deficient} temporal graphs, temporal graphs in which each timestep has at most $k$-edges removed from the underlying graph, can be explored in $O(k n \log n)$ timesteps, or $O(n)$ when only a single edge is removed.

On the negative side, we have, by Erlebach et al. \cite{ERLEBACH2021}, that there exists a class of always connected temporal graphs requiring $\Omega(n^2)$ timesteps to explore, and that there exists temporal graphs with an underlying planar graph of degree at most $4$ that cannot be explored faster than in $\Omega(n \log n)$ timesteps. Further, in \cite{ErlebachKDefficent} Erlebach and Spooner showed that there exist some $k$-edge-deficient temporal graphs that cannot be explored faster than in $\Omega(n \log k)$ timesteps.

In this paper, we focus on the exploration of temporal graphs with \emph{frequent} and \emph{regular} edges. We define such graphs by how often each edge appears and, more importantly in this work, the maximum number of consecutive timesteps such that a given edge is not active. More formally, given a edge, $e$, we define the frequency of $e$, $f_e$, as the smallest value such that $\forall t \in [1, T - f_e]$, $e \in \bigcup_{t' \in [t, t + f_e]} E_{t'}$, with $T$ being the lifetime of the graph. The regularity of an edge, $e$, is defined as the smallest value, $r_e$ such that $\forall t \in [1, T - r_e]$, $e \in E_{t}$ iff $e \in E_{t + r_e}$. We are motivated primarily by public transport networks, where an edge represents a connection between stations. In such systems we assume, or hope, see \cite{kutner2025better}, that the connections between stations in the system have some known frequency or preferably regularity.

The notions of frequency and regularity, by our definitions, have been studied previously with a particular focus on graph properties \cite{arrighi2023multi,erlebach2024parameterized,mertzios2024realizing,mertzios2025temporal}, Cops and Robber games\cite{de2023cop,erlebach2024cop}. Of particular relevance to us is the paper by Bellitto et al. \cite{bellitto_et_al:LIPIcs.SAND.2023.13} on restless exploration, a restriction of exploration where the agent must move every timestep if possible, of periodic temporal graphs. The authors show that determining if the graph has a restless exploration is solvable in polynomial time if the graph has a period of 2, equivalent to a regularity of 2 in our model, and NP-Hard if the period is greater than 2.

\paragraph*{Our Results}

In Section \ref{sec:main_results}, we show that graphs where each edge has a frequency of at most $f$ can be explored in $f(2n - 3)$ timesteps, and that graphs where each edge has a regularity of at most $r$ can be explored in $r(2n - 3)$ timesteps. We note that, as our definition of regularity generalises the definition of periodicity for temporal graphs given in \cite{bellitto_et_al:LIPIcs.SAND.2023.13} and of edge periodicity given in \cite{erlebach2024cop}, thus can be directly applied to these settings.

Further, in Section \ref{sec:motivating_results}, we explicitly introduce and study the following sets of temporal graphs, applying our main results to each in order to get an upper bound on the exploration time:
\begin{itemize}
    \item \emph{Public Transportation Graphs}: A class of temporal graphs defined by a given set of (temporal) walks, with the edges active at timestep $t$ corresponding exactly to the edges in the walks used in that timestep. We show that every edge in these graphs has a period of at most the length of the longest walk, $L$, and thus can be explored in $L(2n - 3)$ timesteps.
    \item \emph{Sequential Connection Graphs}: A class of temporal graphs where each vertex activates exactly one in-edge at each time step, following a given permutation of the in-edges. We show that these graphs can be explored in at most $4 \vert E \vert$ timesteps, where $E$ is the set of edges in the underlying graph.
    \item \emph{Broadcast Networks}: A class of temporal graphs where the set of out-edges from each vertex are either all active, or all inactive at a given timestep, with the additional constraint that all in-edges to a given vertex must be active between each activation of the out-edges. We show that such graphs can be explored in $O(d n^2)$ timesteps, where $d$ is diameter of the graph and, further, in $O(\delta n)$ timesteps if the graph is always connected, where $\delta$ is the minimum degree of any vertex in the graph.
\end{itemize}

\section{Preliminaries}
\label{sec:prelims}

We first define the notation used in this paper. Let $[i, j] = i, i + 1, \dots, j$ denote the (ordered) set of integers between some pair $i, j \in \mathbb{N}$, where $\mathbb{N}$ is the set of natural numbers. Note that if $i = j$, then $[i, j] = \{i \}$, and if $i > j$, $[i, j] = \emptyset$.

We define a \emph{graph} $G = (V, E)$ by a set of vertices, by convention $V = (v_1, v_2, \dots, v_n)$, and set of edges, by convention $E \subseteq V \times V$, each a tuple of vertices. Note that we may write the edge $e$ between the vertices $v_i$ and $v_j$ as either $e = (v_i, v_j)$ or $e = (v_j, v_i)$. When an edge is given explicitly as $(v_i, v_j)$, we call $v_i$ the \emph{start point} and $v_j$ the \emph{end point}.
A \emph{walk} in a graph is an ordered sequence of edges $W = (v_{i_1}, v_{i_2}) (v_{i_2}, v_{i_3}) \dots (v_{i_{m - 1}, v_{i_m}})$, where the end point the the $j^{th}$ edge is the start point of the $(j + 1)^{th}$ edge. The start point of the first edge in a given walk is the start point of the first edge in the walk and, analogously, the end point of the walk is the end point of the last edge in the walk. The \emph{length} of a walk $W$, denoted $\vert W \vert$ is the number of edges in the walk. A graph $G = (V, E)$ is \emph{connected} if there exists, for every pair of vertices $v, u \in V$, a walk starting at $v$ and ending at $u$. Note that a single walk $W$ may contain multiple copies of the same edge without contradiction. The \emph{distance} between two vertices $v, u \in V$ in a graph, denoted $\dist(v,u)$ is the value such there exists some walk $W$ of length $\dist(v, u) $ with the start point $v$, the end point $u$ and $\forall W' \in \{W''$ is a walk in $G$ starting at $v$ and ending at $u \}$, $\vert W' \vert \geq \dist(v, u) $. The \emph{diameter} of a graph $G$ is the maximum distance between any pair of points in the graph, formally the value $\max_{v, u \in V} \dist(v, u)$. 

The \emph{neighbourhood} of a vertex $v$, denoted $N(v)$ is the set of vertices $N(v) = \{u \in V \mid (v, u) \in E \}$. The \emph{degree} of a vertex $v$, denoted $\Delta(v)$ is the number of vertices in the neighbourhood of $v$, formally, $\Delta(v) = \vert N(v) \vert$.

A \emph{directed graph} is a graph where each edge has an \emph{orientation}, with a fixed start point and end point. Thus, the edge $(v_i, v_j)$ is not equivalent to $(v_j, v_i)$. A walk on a directed graph is defined in the same way as a walk on an undirected graph. In this paper, we restrict ourselves to \emph{symmetric directed graphs}. A directed graph $G = (V, E)$ is \emph{symmetric} if the edge $(v_i, v_j) \in E$ iff $(v_j, v_i) \in E$. The \emph{in-edges }of a vertex $v$ is the set of edges in $E$ with the end point $v$, formally, $\inedge(v) = \{(u, v) \in E\}$. Analogously, the \emph{out-edges} of a vertex $v$ is the set of edges in $\out(v) = \{(v, u) \in E\}$, i.e. the set of vertices with the start point $v$.

An \emph{edge-weighted graph} (resp., \emph{edge-weighted directed graph}) is a graph equipped with a \emph{weighting} function $F: E \mapsto \mathbb{N}$, associating some integer cost to each edge in the graph.

A graph of $n$-vertices is a \emph{tree} if the graph contains $n - 1$ edges and is connected. Given a graph $G = (V, E)$, a \emph{spanning tree} of $G$ is a tree $T = (V, E')$ over the same set of vertices as $V$ with an edge set $E' \subseteq E$. The \emph{weight} of a spanning tree $T = (V, E')$ on the graph $G$ weighted by the function $F$ is $\sum_{e \in E'} F(e)$. A \emph{minimum weight spanning tree} of a weighted graph $G$ weighted by the function $F$, is a spanning tree $T = (V, E')$ such that, letting $\mathcal{T} = \{(V, E_1), (V, E_2), \dots, (V, E_{\tau})\}$ be the set of all spanning trees of $G$, $\sum_{e \in E'} F(e) \leq \sum_{e' \in H} F(e'), \forall (V, H) \in \mathcal{T}$.

\paragraph*{Temporal Graphs}

A \emph{temporal graph} is a generalisation of a graph, herein called a \emph{static graph} whenever confusion may otherwise arise, where, rather than having a single edge set, the graph contains an ordered sequence of timesteps, by convention $E_1, E_2, \dots, E_T$, while maintaining a shared set of vertices. We call each edge set a \emph{timestep} (also known as a snapshot). An edge $e$ is \emph{active} at timestep $t$ if $e \in E_t$. We assume, without loss of generality, that each timestep contains at least one active edge. In our definition, an edge $e$ may be active in any number of timesteps.

By default, we assume that the temporal graph is undirected, with a temporal graph referred to as a \emph{directed temporal graph} if the edge sets are directed. We assume that either every edge set is directed, or none are. Given a temporal graph $\mathcal{G}$ the \emph{lifetime} of the graph is the number of timesteps, thus, if $\mathcal{G} = (V, E_1, E_2, \dots, E_T)$, then the lifetime of the graph is $T$.

The \emph{underlying graph} of a temporal graph $\mathcal{G} = (V, E_1, E_2, \dots, E_T)$, denoted $U(\mathcal{G})$, is the static graph $U(\mathcal{G}) = (V, \bigcup_{t \in [1, T]} E_t)$, i.e. the static graph formed with the edge set corresponding to the union of all timesteps in the graph. A temporal graph is a \emph{symmetric directed temporal graph} if the underlying graph is a symmetric directed graph. In general, given some property $X$ of a static graph, a temporal graph has the property $X$ if the underlying graph satisfies the property. A temporal graph \emph{always satisfies} a given property $X$ if the static graphs $G_1 = (V, E_1)$, $G_2 = (V, E_2)$, $\dots$, $G_T = (V, E_T)$ also satisfy this property. Notably, a static graph $G = (V, E)$ is \emph{connected} if, for each pair of vertices $v_i, v_j \in V$, there exists some walk starting at $v_i$ and ending at $v_j$. Therefore, a temporal graph $\mathcal{G} = (V, E_1, E_2, \dots, E_{T})$ is an \emph{always connected temporal graph} if the static graph $G_t = (V, E_t)$ is connected, for every $t \in [1, T]$, where $T$ is the lifetime of $\mathcal{G}$.

The \emph{frequency} of an edge $e$, denoted $f_e$, is the minimum number of timesteps such that $e$ is active at least once every $f_e$ timesteps. Formally, $f_e$ satisfies $\forall t \in [1, T + 1 - f_e]$, $e \in \bigcup_{t' \in [t, t + f_e - 1]} E_{t'}$ and, $\forall f' \in [1, f_e - 1]$, $\exists t \in [1, T + 1 - f']$ such that $e \notin \bigcup_{t' \in [t, t + f' - 1]} E_{t'}$. A temporal graph is \emph{ $f$-frequent} if $f_e \leq f$, $\forall e \in E$.

An edge $e$ is \emph{$r$-regular} if given any $t \in [r + 1, T - r]$, either $e$ is active in all of $E_{t - r}$, $E_t$, and $E_{t + r}$, or $e$ is inactive in all of $E_{t - r}$, $E_t$, and $E_{t + r}$. The \emph{regularity} of an edge $e$, denoted $r_e$, is the value such that $e$ is $r_e$ regular and, $\forall r' \in [1, r_e - 1]$, $e$ is not $r'$ regular. A temporal graph is \emph{$r$-regular} if $r_e \leq r$, $\forall e \in E$.  Observe that $r_e \geq f_e$. Therefore, every $x$-regular temporal graph is an $x$-frequent temporal graph.

A \emph{temporal walk} is an set of edge-timestep tuples, $\mathcal{W} = ((v_{i_1}, v_{i_2}), t_1)$, $((v_{i_2}, v_{i_3}), t_2)$, $\dots$, $((v_{i_{m - 1}}, v_{i_m}), t_{m - 1})$ such that:
\begin{itemize}
    \item $(v_{i_1, i_2}) (v_{i_2}, v_{i_3}) \dots (v_{i_{m - 1}, v_{i_m}})$ form a walk in the underlying graph $U(\mathcal{G})$,
    \item the edge $(v_{i_j}, v_{i_{j + 1}})$ is active in timestep $t_j$, and,
    \item $t_1 < t_2 < \dots < t_{m - 1}$.
\end{itemize}
The \emph{length} of a temporal walk $\mathcal{W}$, denoted $\vert \mathcal{W} \vert$ is the timestep $t_{m - 1}$ associated with the final edge. We refer to the $i^{th}$ tuple of a temporal walk as the $i^{th}$ \emph{step} in the walk.

\paragraph*{Exploration}

An \emph{exploration} of a static graph is a walk $W$ such that there exists, for every vertex $v \in V$ at least one edge in $W$ containing $v$. Analogously, an exploration of a temporal graph is a temporal walk $\mathcal{W}$ such that there exists, every vertex $v \in V$ , at least one edge in $\mathcal{W}$ containing $v$.

\begin{problem}[Temporal Graph Exploration Problem]
    \label{prob:exploration_problem_decide}
    Given a temporal graph, $\mathcal{G} = (V, E_1, E_2, \dots, E_T)$, a vertex $v \in V$ and length $\ell \in [1, T]$, does there exist a temporal walk $\mathcal{W}$ starting at $v$ with a length less than $\ell$?
\end{problem}

We can rewrite this as an optimisation problem as:

\begin{problem}[Fastest Temporal Graph Exploration Problem]
    \label{prob:exploration_problem_optimise}
    Given a temporal graph, $\mathcal{G} = (V, E_1, E_2, \dots, E_T)$, a vertex $v \in V$ what is the temporal walk $\mathcal{W}$ starting at $v$ and exploring $\mathcal{G}$ such that, for any alternative walk $\mathcal{W}'$ either $\mathcal{W}'$ does not explore $\mathcal{G}$ or $\vert \mathcal{W} \vert \leq \vert \mathcal{W}' \vert$?
\end{problem}

As noted in the introduction, both problems are known to be NP-hard. Therefore, this paper focuses on providing upper bounds for several classes of graphs such that every graph has a walk exploring it in $f(\mathcal{G})$ timesteps for some stated function $f$.



\section{Frequent and Regular Edges}

\label{sec:main_results}

In this section, we provide our primary algorithmic results, covering the our algorithms for exploring $f$-frequent temporal graphs and $r$-regular temporal graphs. At a high level, our approach is as follows. We build a minium weight spanning tree on the edge-weighted static graph corresponding to the underlying graph, with the weight on each edge corresponding to the frequency of the edge in the temporal graph. First, we provide technical results on determining the frequency of each edge.

\begin{lemma}
    \label{lem:getting_frequency}
    Given a temporal graph $\mathcal{G} = (V, E_1, E_2, \dots, E_T)$, with the underlying graph $U(\mathcal{G}) = (V, E)$, the frequency $f_e$ of the edge $e \in E$ can be determined in $O(T)$ time.
\end{lemma}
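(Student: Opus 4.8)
The plan is to reduce the computation of $f_e$ to finding the largest ``gap'' between consecutive timesteps in which $e$ is active, with the start and the end of the lifetime treated as sentinel occurrences. Let $a_1 < a_2 < \dots < a_k$ be the timesteps at which $e$ is active; since $e \in E = \bigcup_{t \in [1,T]} E_t$ we have $k \geq 1$. Put $a_0 = 0$ and $a_{k+1} = T+1$. I claim
\[
  f_e = \max_{0 \leq i \leq k} (a_{i+1} - a_i).
\]
Granting this identity, a single left-to-right pass over $E_1,\dots,E_T$ suffices: maintain the index of the most recent active timestep (initialised to $0$) together with the running maximum of $t - (\text{previous active timestep})$ over the active $t$ encountered, and after the pass update this maximum once against $T+1-a_k$. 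This evaluates the right-hand side in $O(T)$ time, assuming the representation of $\mathcal{G}$ supports testing $e \in E_t$ in constant time (otherwise one gets $O(\sum_t |E_t|)$, still linear in the input size).

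To prove the identity, write $g = \max_{0 \le i \le k}(a_{i+1}-a_i)$ and note $g \geq 1$ since the $a_i$ are strictly increasing integers. For $f_e \leq g$: take any window $[t,t+g-1]$ with $t \in [1,T+1-g]$ and suppose $e$ is inactive throughout it, i.e.\ no $a_j$ lies in $[t,t+g-1]$. Because $t \geq 1 > a_0$ and $t+g-1 \leq T < T+1 = a_{k+1}$, the largest index $i$ with $a_i \leq t-1$ satisfies $i \in [0,k]$; as $a_{i+1} \geq t$ but $a_{i+1} \notin [t,t+g-1]$, in fact $a_{i+1} \geq t+g$, so $a_{i+1}-a_i \geq g+1$, contradicting the definition of $g$. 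Hence every window of length $g$ meets an active timestep, so $f_e \leq g$. For $f_e \geq g$: if $g=1$ this is immediate because $f_e \geq 1$; otherwise choose $i^\ast$ with $a_{i^\ast+1}-a_{i^\ast}=g$ and consider the window $[a_{i^\ast}+1,\,a_{i^\ast+1}-1]$, which is valid (its endpoints lie in $[1,T]$ since $a_{i^\ast}\geq 0$ and $a_{i^\ast+1}\leq T+1$), has length $g-1\geq 1$, and contains no $a_j$. So some window of length $g-1$ misses $e$ entirely, whence $f_e > g-1$, i.e.\ $f_e \geq g$. Combining the two inequalities gives $f_e = g$, and the running-time claim follows from the scan above.

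I expect the only delicate point to be the boundary behaviour: the potentially missing windows at the very beginning and very end of the lifetime must be accounted for, which is precisely what the sentinels $a_0 = 0$ and $a_{k+1} = T+1$ accomplish, and the degenerate case $g = 1$ (equivalently $f_e = 1$) must be checked directly against the formal two-sided definition of $f_e$, where the index range $[1, f_e - 1]$ is empty. None of the steps is deep; the substance is in stating the gap characterisation correctly and matching it to the definition of the frequency.
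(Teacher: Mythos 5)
Your proof is correct and takes essentially the same approach as the paper: a single $O(T)$ left-to-right scan that computes the maximum gap between consecutive active timesteps with sentinels, which is exactly the paper's ``longest run of inactive timesteps plus one'' characterisation. The only difference is that you verify this characterisation against the formal two-sided definition of $f_e$ in detail, whereas the paper asserts it without proof.
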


\begin{proof}
    Note that $f_e$ corresponds to the number of timesteps in the longest consecutive number of timesteps that do not contain $e$, plus one.
    To determine this value, we take a brute force approach, checking each timestep in order, while maintaining a pair of variables corresponding to the longest gap without a timestep containing $e$, and the number of timesteps since the last timestep containing $e$. Formally, let $\ell$ be the largest number of consectutive timesteps not containing $e$ found so far, and let $C$ be the number of timesteps since the last time $e$ was active, with both initialy set to $0$. For each $t \in [1, T]$, in order from $1$ to $T$, we do one of the following:
    \begin{itemize}
    	\item if $e \notin E_t$, we increment the value of $C$, setting it to $C + 1$, then update the value of $\ell$ to $\max(\ell, C)$.
    	\item otherwise, if $e \in E_t$, we reset $C$ to $0$.
    \end{itemize}
    Once this proccess has completed, note the $\ell$ must contain the length of the longest consecutive sequence of timesteps not containing $e$, formally, the value such that $\exists t \in [1, T - \ell + 1]$ s.t. $e \notin \bigcup_{t' \in [t, t + \ell - 1]} E_{t'}$ and, for any $t \in [1, T - \ell]$, $e \in \bigcup_{t' \in [t, t + \ell]} E_{t'} $. Therefore, the frequency of $e$ is $\ell + 1$, giving the algorithm.
    See Algorithm \ref{alg:calculating_frequency} for pseudocode corresponding to this algorithm.
\end{proof}

\begin{algorithm}[ht]
    \caption{Algorithm for computing the frequency of the edge $e$ in the temporal graph $\mathcal{G} = (V, E_1, E_2, \dots, E_T)$ following Lemma \ref{lem:getting_frequency}.}
    \label{alg:calculating_frequency}
    \begin{algorithmic}
        \Procedure{ComputeEdgeFrequency}{Temporal Graph $(V, E_1, E_2, \dots, E_T)$, edge $e \in \bigcup_{t \in [1, T]} E_t$}
            \State $\ell \gets 0$
            \State $C \gets 0$
            \For{$t \in [1, T]$} \Comment{\emph{We assume $t$ iterates over the set $[1, T]$ in order.}}
                \If{$e \in E_t$}
                    \State $C \gets 0$
                \Else
                    \State $C \gets C + 1$
                    \State $\ell \gets \max(\ell, C)$
                \EndIf
            \EndFor
            \State \Comment{\emph{Note that $\ell$ is the longest run of timesteps not containing $e$. Therefore, the frequency of $e$ is $\ell + 1$, as $e$ must appear at least once every $\ell + 1$ timesteps.}}
            \State \textbf{return} $\ell + 1$
        \EndProcedure
    \end{algorithmic}
\end{algorithm}

\begin{corollary}
    \label{col:finding_the_total_frequency}
    Given a temporal graph $\mathcal{G} = (V, E_1, E_2, \dots, E_T)$, let $F$ be an array over the edge set $E = \bigcup_{t \in T} E_t$ such that given some $e \in E$, $F[e]$ is the frequency of the edge $e$ in $\mathcal{G}$. Then, the array $F$ may be computed in $O(\vert E \vert \cdot T)$ time.
\end{corollary}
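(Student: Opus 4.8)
The plan is to invoke Lemma \ref{lem:getting_frequency} once per edge. Since that lemma computes $f_e$ in $O(T)$ time and the underlying graph has $\vert E \vert$ edges, running it on each edge in turn costs $O(\vert E \vert \cdot T)$ in total; all that remains is to make the set $E$ available to iterate over and to write each answer into the array $F$.

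First I would construct $E = \bigcup_{t \in [1,T]} E_t$ explicitly by a single pass over the timesteps $E_1, \dots, E_T$, collecting each distinct edge encountered (using, say, a dictionary keyed by edge so that repeats are discarded). Since $\vert E_t \vert \le \vert E \vert$ for every $t$, this pass runs in $O(\sum_{t=1}^{T} \vert E_t \vert) = O(\vert E \vert \cdot T)$ time. Then, for each $e \in E$ I would run Algorithm \ref{alg:calculating_frequency} and store its output as $F[e]$; this is $\vert E \vert$ runs of an $O(T)$-time procedure, hence $O(\vert E \vert \cdot T)$ overall, and adding the preprocessing pass leaves the bound unchanged.

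The only subtlety, and the nearest thing to an obstacle, is that each run of Algorithm \ref{alg:calculating_frequency} must genuinely take $O(T)$ time, which presumes the membership test $e \in E_t$ is cheap. This is ensured either by an appropriate representation of the input (storing each $E_t$ as a dictionary of its edges) or, more cleanly, by fusing the $\vert E \vert$ separate runs into one sweep over $t = 1, \dots, T$: keep for every edge $e$ a value $\mathrm{last}[e]$ (the most recent timestep at which $e$ was active, initialised to $0$) and a running maximum gap $\ell[e]$ (initialised to $0$); at timestep $t$, for each $e \in E_t$ set $\ell[e] \gets \max(\ell[e],\, t - 1 - \mathrm{last}[e])$ and then $\mathrm{last}[e] \gets t$; after the sweep, for each $e \in E$ set $\ell[e] \gets \max(\ell[e],\, T - \mathrm{last}[e])$ and output $F[e] = \ell[e] + 1$, which agrees with Lemma \ref{lem:getting_frequency}. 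The work is then $O(\sum_{t=1}^{T} \vert E_t \vert + \vert E \vert) = O(\vert E \vert \cdot T)$, as claimed.
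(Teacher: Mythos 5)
Your proposal is correct and matches the paper's (implicit) argument: the corollary is obtained simply by running the $O(T)$-time procedure of Lemma \ref{lem:getting_frequency} once for each of the $\vert E \vert$ edges of the underlying graph. The extra care you take about building $E$ and making membership tests cheap (or fusing everything into one sweep) is a sensible implementation detail but not a different proof strategy.
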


\begin{figure}
    \centering
    \begin{tabular}{c:c:c}
         \includegraphics[width=0.25\linewidth]{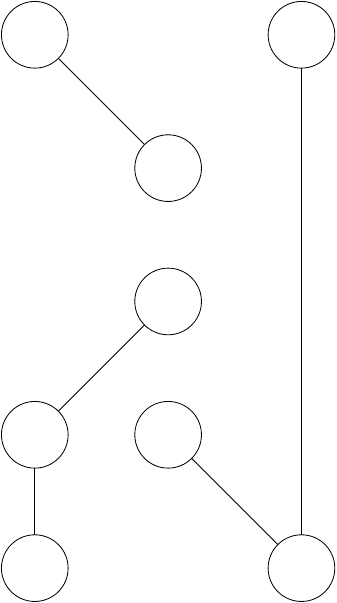} & \includegraphics[width=0.25\linewidth]{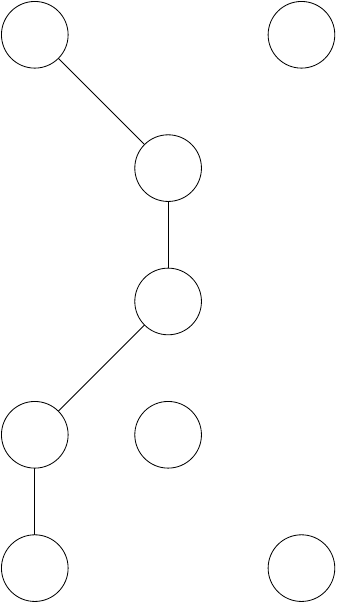} & \includegraphics[width=0.25\linewidth]{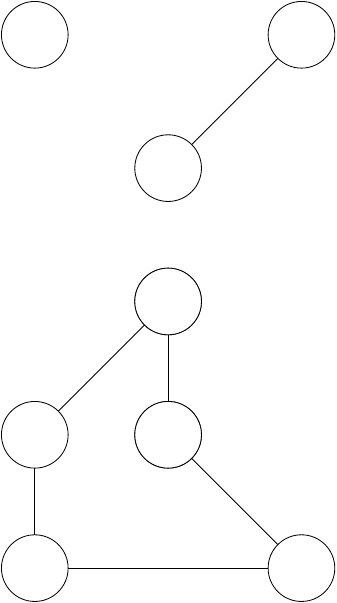} \\
         \emph{Timestep 1 ($E_1$)} & \emph{Timestep 2 ($E_2$)} & \emph{Timestep 3 ($E_3$)}\\
         \hdashline
         \includegraphics[width=0.25\linewidth]{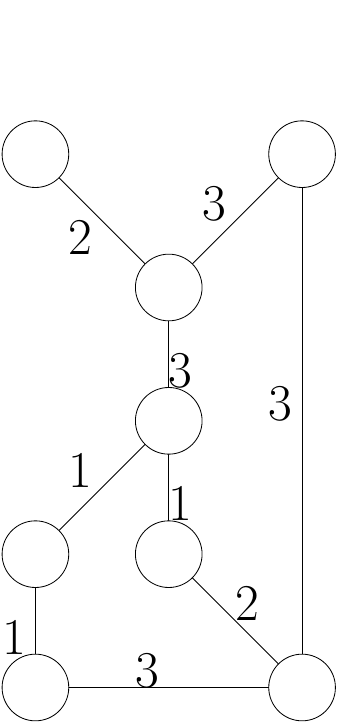} & \includegraphics[width=0.25\linewidth]{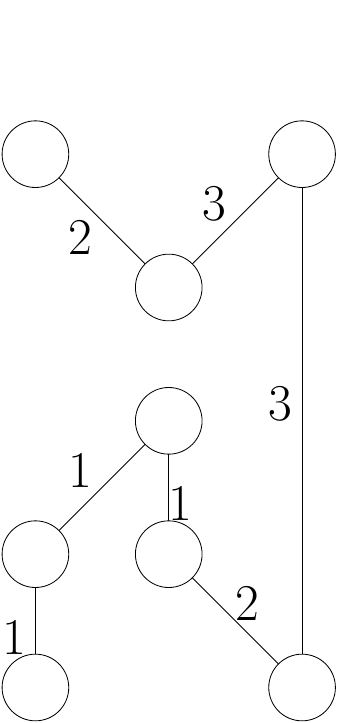} & \includegraphics[width=0.25\linewidth]{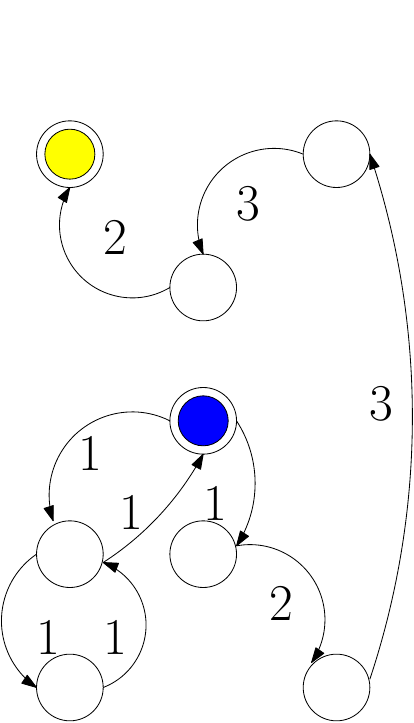}\\
         \emph{$FW(\mathcal{G})$} & \emph{Minium Weight Spanning Tree on $FW(\mathcal{G})$} & \emph{Exploration of $\mathcal{G}$}
    \end{tabular}
    \caption{Overview of the exploration of the temporal graph $\mathcal{G}$ with $\mathcal{G} = (V, E_1, E_2, E_3, E_1, E_2, E_3, \dots, E_1, E_2, E_3)$, i.e. the temporal graph formed by repeating the edge sets $E_1$, $E_2$ and $E_3$ some number of times.}
    \label{fig:Exploration_Overview}
\end{figure}




We now show how to explore temporal graphs using weighted spanning trees as a basis. Informally, we construct a minimum weight spanning tree over the static graph formed by weighting each edge on the underlying graph by the frequency of the edge in the temporal graph. This tree is used to devise an exploration schedule by finding a walk exploring this tree, with the total weight of this walk corresponding to the maximum length of the exploration of the temporal graph.

\begin{lemma}[\cite{kruskal1956shortest}]
    \label{lem:computing_spanning_tree}
    A minimum weight spanning tree on the weighted static graph $G = (V, E)$ can be constructed in $O(\vert E \vert \log \vert V \vert)$ time.
\end{lemma}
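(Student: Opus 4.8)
\noindent
The plan is to prove this by analysing Kruskal's algorithm, which is exactly the procedure of the cited paper. First I would describe the algorithm: assuming (as needed for a spanning tree to exist) that $G = (V,E)$ is connected, sort the edge set $E$ in nondecreasing order of weight, obtaining a sequence $g_1, g_2, \dots, g_{|E|}$ with $F(g_1) \le F(g_2) \le \dots \le F(g_{|E|})$; then process the edges in this order, maintaining a forest $(V, E')$ with $E'$ initially empty, adding the current edge $g_i$ to $E'$ whenever its two endpoints lie in different connected components of $(V, E')$, and discarding $g_i$ otherwise. Since $G$ is connected, this terminates with $|E'| = |V| - 1$, i.e. with a spanning tree $T = (V, E')$.

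For correctness I would use an exchange argument. List the edges of $T$ as $e_1, e_2, \dots, e_{|V|-1}$ in the order in which the algorithm added them (so their weights are nondecreasing), and suppose for contradiction that $T$ is not of minimum weight. Among all minimum weight spanning trees of $G$, pick one, $T^\star = (V, E^\star)$, sharing as many edges with $T$ as possible, and let $e_i$ be the first edge in the add-order with $e_i \notin E^\star$; note $e_1, \dots, e_{i-1} \in E^\star$. Adding $e_i$ to $T^\star$ closes a unique cycle $C$, and since $\{e_1, \dots, e_{i-1}, e_i\}$ is acyclic, $C$ must contain some edge $e' \notin \{e_1, \dots, e_i\}$. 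I would then argue $F(e') \ge F(e_i)$: the edges the algorithm keeps occur in the same relative order in the sorted sequence as in the add-order, so if $F(e') < F(e_i)$ (ties broken by a fixed total order used for sorting) then $e'$ is examined no later than $e_i$ but not kept, hence at the moment it is examined it closes a cycle inside a subset of $\{e_1, \dots, e_{i-1}\}$; but $\{e_1, \dots, e_{i-1}\} \cup \{e'\} \subseteq E^\star$ is acyclic, a contradiction. Consequently $T^{\star\star} = (V, (E^\star \setminus \{e'\}) \cup \{e_i\})$ is again a spanning tree with $\weight(T^{\star\star}) \le \weight(T^\star)$, so it too is a minimum weight spanning tree, and it shares strictly more edges with $T$ than $T^\star$ does, contradicting the choice of $T^\star$. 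Hence $T = T^\star$ is a minimum weight spanning tree.

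For the running time I would bound the two phases separately. Sorting the $|E|$ edges takes $O(|E| \log |E|)$ time, and since $|E| \le \binom{|V|}{2} < |V|^2$ we have $\log |E| < 2 \log |V|$, so this phase is $O(|E| \log |V|)$. The connectivity tests and merges are implemented with a disjoint-set (union--find) structure over $V$: initialisation costs $O(|V|)$, and over the whole run there are at most $2|E|$ "find" operations and at most $|V| - 1$ "union" operations; using union by rank, each such operation costs $O(\log |V|)$, for a total of $O(|E| \log |V|)$ (path compression would make this near-linear, but is not needed). Summing the two phases yields the claimed $O(|E| \log |V|)$ bound.

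The numerical and data-structure parts here are routine, so the only step requiring genuine care is the exchange argument for correctness — in particular, justifying the inequality $F(e') \ge F(e_i)$ cleanly, which rests on the observation that kept edges appear in the sorted sequence in the same relative order in which they are added, together with fixing a consistent tie-breaking rule for equal-weight edges.
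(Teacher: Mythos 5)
Your proof is correct. Note, though, that the paper itself gives no proof of this lemma at all: it is stated as a known result and discharged by the citation to Kruskal's 1956 paper, so there is no internal argument to compare against. What you have supplied is the standard self-contained treatment of Kruskal's algorithm -- greedy edge insertion in nondecreasing weight order, correctness by the usual exchange argument (choosing a minimum weight spanning tree agreeing with the algorithm's output on a longest prefix, swapping along the unique cycle, and justifying $F(e') \ge F(e_i)$ via the fact that a lighter discarded edge would have closed a cycle inside a subset of $E^\star$), and the $O(\vert E \vert \log \vert V \vert)$ bound from sorting plus union--find with union by rank. All steps are sound, including the tie-breaking remark and the reduction $\log \vert E \vert = O(\log \vert V \vert)$; your only (reasonable and explicitly flagged) assumption is that $G$ is connected, which is implicit in the lemma since otherwise no spanning tree exists, and which holds in the paper's application because $U(\mathcal{G})$ must be connected for exploration to be possible. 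So your proposal is a valid, more detailed substitute for the citation rather than a divergence from any argument in the paper.
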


Given a temporal graph $\mathcal{G} = (V, E_1, E_2, \dots, E_t)$, let $FW(\mathcal{G}) = (V, E, W)$ (read \textbf{F}requency \textbf{W}eighted) be the edge-weighted static graph such that $V, E = U(\mathcal{G})$ and $W : E \mapsto [1, T]$ is the weighting function over $E$ with $W(e) = f_e$, where $f_e$ is the frequency of the edge $e \in E$.

\begin{lemma}[Folklore]
    \label{lem:explorling_trees}
    Given a tree $T$ with $n$ vertices, there exists a walk exploring $T$ of length at most $2n - 3$ starting at any vertex $v$. Further, this walk can be found in $O(n)$ time.
\end{lemma}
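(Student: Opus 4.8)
The plan is to prove this by a standard DFS / Euler-tour style argument on the tree $T$, then shave off the slack that lets us improve the naive bound of $2n-2$ to $2n-3$.

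\medskip

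\emph{First I would} recall the classical fact that a depth-first traversal of a tree, started at a vertex $v$ and returning to $v$, crosses every edge exactly twice (once in each direction), giving a closed walk of length exactly $2(n-1) = 2n-2$ that visits every vertex. Concretely, root $T$ at $v$; the DFS visit order is obtained recursively by, at each vertex, recursing into each child subtree in turn and returning. This closed walk is computed in $O(n)$ time by a single DFS. It is already an exploration walk of length $2n-2$, so the content of the lemma is the $-1$ improvement and the claim that the starting vertex can be arbitrary.

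\medskip

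\emph{Next}, to get down to $2n-3$, observe that we do not need to return to the root: we may stop the walk the moment the last unvisited vertex is reached. Pick $v$ to be the start vertex; among all vertices, let $u$ be a vertex that the DFS visits last, and note that $u$ must be a leaf of $T$ (if $u$ were internal, DFS would descend past it before finishing, contradicting its being last). The edge incident to $u$, call it $e=(p,u)$, is traversed by the closed DFS walk exactly twice: once on the way down to $u$ and once on the way back up to $p$. If we truncate the closed walk at the first arrival at $u$, we delete the final "back-up" traversal of $e$ together with everything after it — but by choice of $u$ everything after the first visit to $u$ consists only of return moves along already-used edges, so no unvisited vertex is lost. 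Thus the truncated walk still explores $T$, and it has length at most $(2n-2)-1 = 2n-3$. Since $v$ was chosen arbitrarily, this holds starting from any vertex; and the whole construction (root, DFS, record the last-visited leaf, truncate) runs in $O(n)$ time. A small edge case to dispatch: for $n=1$ the bound $2n-3=-1$ is vacuous (no walk needed), and for $n=2$ a single edge traversal has length $1 = 2n-3$; both are consistent.

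\medskip

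\emph{The main obstacle}, such as it is, is purely bookkeeping: one must argue carefully that truncating at the \emph{first} visit to the last-visited leaf $u$ genuinely removes at least one edge-traversal (rather than zero) and removes no "useful" traversal. The key observations making this airtight are (i) $u$ is a leaf, so it is reached exactly once by the DFS and its pendant edge is the unique edge whose second (upward) traversal occurs after the visit to $u$, and (ii) after DFS first visits the globally last vertex, every subsequent move is an upward return along an edge both of whose endpoints were already visited — hence deletable. Everything else is the textbook DFS-tour fact, which I would simply cite as folklore / fold into a one-line induction on $n$.
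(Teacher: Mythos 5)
Your proof is correct: the DFS/Euler-tour closed walk of length $2n-2$ followed by truncation at the first visit to the last-visited vertex (which, for $n\ge 2$, is a leaf whose pendant-edge return traversal, and everything after it, consists only of backtracking along already-used edges) gives the $2n-3$ bound in $O(n)$ time, and your handling of the degenerate cases $n=1,2$ is fine. The paper itself offers no proof of this lemma, citing it as folklore, and your argument is precisely the standard one that citation relies on, so there is nothing to reconcile between the two.
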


\begin{theorem}
    \label{thm:frequent_exploration}
    Given some temporal graph $\mathcal{G} = (V, E_1, E_2, \dots, E_T)$ with the underlying graph $U(\mathcal{G}) = (V, E)$, $\mathcal{G}$ can be explored, starting from any vertex $v \in V$, in $2 F$ timesteps, where $F$ is the weight of the minimum weight spanning tree on $FW(\mathcal{G})$. Further, this walk can be determined in $$O\left(T \left\vert E \right\vert \right)$$ time.
\end{theorem}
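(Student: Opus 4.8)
The plan is to reduce temporal exploration to two standard ingredients: a cheap spanning tree, and a greedy ``wait, then cross'' simulation of a static walk that explores that tree. Concretely (assuming $U(\mathcal{G})$ is connected, as is necessary for any exploration to exist), I would first compute $FW(\mathcal{G})$ using Corollary~\ref{col:finding_the_total_frequency}, then a minimum weight spanning tree $\mathcal{S}$ of $FW(\mathcal{G})$ by Lemma~\ref{lem:computing_spanning_tree}, and finally, by Lemma~\ref{lem:explorling_trees}, a walk $\pi = e_1 e_2 \cdots e_k$ in $\mathcal{S}$ that starts at $v$ and visits every vertex of $\mathcal{S}$ (hence of $V$). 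The only extra fact about $\pi$ I need is that it traverses each edge of $\mathcal{S}$ \emph{at most twice}; this holds for the depth-first / Euler-type traversal underlying Lemma~\ref{lem:explorling_trees} (the full Euler tour uses each edge exactly twice and has length $2n-2 \le 2F$, and truncating it to length $2n-3$ only removes traversals), so the assumption is safe.

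Next I would convert $\pi$ into a temporal walk greedily: set $\tau_0 = 0$, and for $i = 1,\dots,k$ let $\tau_i$ be the least timestep with $\tau_i > \tau_{i-1}$ and $e_i \in E_{\tau_i}$ --- that is, stand at the current vertex until $e_i$ activates, then cross it. By the definition of the frequency $f_{e_i}$, the length-$f_{e_i}$ window of timesteps beginning at $\tau_{i-1}+1$ contains an activation of $e_i$, so $\tau_i \le \tau_{i-1} + f_{e_i}$. Consecutive edges of $\pi$ share an endpoint, the $\tau_i$ are strictly increasing by construction, and each $e_i$ is active at $\tau_i$, so the three defining conditions of a temporal walk hold and $\mathcal{W} = ((e_i,\tau_i))_{i=1}^{k}$ is one; it starts at $v$, and since $\pi$ meets every vertex of $V$, so does $\mathcal{W}$, hence $\mathcal{W}$ explores $\mathcal{G}$. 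Telescoping $\tau_i - \tau_{i-1} \le f_{e_i}$ and using that each edge of $\mathcal{S}$ appears at most twice among $e_1,\dots,e_k$,
\[
|\mathcal{W}| \;=\; \tau_k \;\le\; \sum_{i=1}^{k} f_{e_i} \;\le\; 2 \sum_{e \in \mathcal{S}} f_e \;=\; 2F,
\]
the last equality being the definition of $F$ as the weight of a minimum weight spanning tree of $FW(\mathcal{G})$.

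For the running time I would add up the steps: $O(|E|\cdot T)$ for $FW(\mathcal{G})$ by Corollary~\ref{col:finding_the_total_frequency}; $O(|E|\log|V|)$ for $\mathcal{S}$ by Lemma~\ref{lem:computing_spanning_tree}; $O(n)$ for $\pi$ by Lemma~\ref{lem:explorling_trees}; and $O(|E|\cdot T)$ for the greedy simulation, since it processes $k = O(n) = O(|E|)$ edge slots, each scanning at most $T$ timesteps, with membership tests made $O(1)$ by an $O(|E|\cdot T)$ preprocessing of the per-edge activation lists (which the computation of $FW(\mathcal{G})$ already produces). A temporal walk cannot be longer than the lifetime, so $2F \le T$ whenever the bound is meaningful, and then $2(n-1) \le 2F \le T$ forces $\log|V| = O(T)$; the spanning-tree term is therefore absorbed and the total is $O(T|E|)$.

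The content of the argument is light, so the ``hard parts'' are really two points of care. First, the greedy step must not walk off the end of the lifetime: the frequency property only supplies an activation of $e_i$ inside the window $[\tau_{i-1}+1,\,\tau_{i-1}+f_{e_i}]$ when that window lies in $[1,T]$. Since $\tau_{i-1} \le \sum_{j<i} f_{e_j} \le 2F - f_{e_i}$, the window stays within $[1,T]$ exactly when $T \ge 2F$, which is precisely the regime in which the claimed $2F$-step exploration can exist --- so this is consistent rather than an obstruction, but it should be stated. Second, one must insist on a \emph{minimum} weight spanning tree: any spanning tree yields an exploration of length twice its weight, and only the minimum one gives the bound $2F$ as phrased. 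Everything else --- validity of the constructed temporal walk, the telescoping, and the at-most-twice edge usage --- is routine.
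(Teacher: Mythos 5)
Your proposal matches the paper's own proof essentially step for step: build the frequency-weighted graph, take a minimum weight spanning tree, walk it via Lemma \ref{lem:explorling_trees}, greedily convert to a temporal walk using $\tau_i \le \tau_{i-1} + f_{e_i}$, and sum the same complexity terms to $O(T|E|)$. Your write-up is in fact slightly more careful than the paper's on the ``each tree edge is traversed at most twice'' accounting and on the lifetime being long enough for the frequency windows, but these are refinements of the same argument, not a different route.
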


\begin{proof}
    First, we construct the minimum weight spanning tree $T$ on the static graph $FW(\mathcal{G})$, and a walk, $W = e_1, e_2, \dots, e_m$, exploring $T$ and starting at $v$. We convert $W$ into a temporal walk $\mathcal{W}$ as follows.
    Let $t_1$ be the index of first timestep in which $e_1$ is active. Formally, $e_1 \in E_{t_1}$ and, $\forall t' \in [1, t_1 - 1]$, $e_1 \notin E_{t'}$.
    We set the tuple $(e_1, t_1)$ as the first step in the walk $\mathcal{W}$.

    We determine the remaining steps on $\mathcal{W}$ in an iterative manner, using the value of the $(i - 1)^{th}$ step to determine the $i^{th}$ step. Formally, if the $(i - 1)^{th}$ tuple of $\mathcal{W}$ is $(e_{i - 1}, t_{i - 1})$, then the $i^{th}$ step is $(e_i, t_i)$ where $t_i \in [t_{i - 1} + 1, T]$ is the index such that $e_i \in E_{t_i}$ and $\forall t' \in [t_{i - 1} + 1, t_i - 1]$, $e_i \notin E_{t'}$.

    To determine the total length of this path, observe that, by definition, if the edge $e_i$ is not active at timestep $t_{i - 1} + 1$, the there must exist some $t_i \in [t_{i - 1} + 2, t_{i - 1} + 1 + f_{e_i}]$ such that $e_i$ is active. Therefore, $t_m \leq \sum_{e \in \{e_1, e_2, \dots, e_m\}} f_e$. Further, as the weight of the spanning tree $F = \sum_{e \in \{e_1, e_2, \dots, e_m\}} f_e$, we get $t_m \leq 2F$.
    
    To determine the time complexity, note first that $U(\mathcal{G})$ can be computed in $O(\sum_{t \in [1, T]}\vert E_t \vert)$ time. Next, we can compute, by Corollary \ref{col:finding_the_total_frequency}, the frequency of each edge in $O(\vert E \vert T)$ time. Further, we have, by Lemma \ref{lem:computing_spanning_tree}, that the spanning tree can be computed in $O(\vert E \vert \log(\vert V \vert))$ time. From Lemma \ref{lem:explorling_trees}, the walk $W$ can be computed in $O(\vert V \vert)$ time.
    Finally, noting that we can determine if $e_i \in E_t$ in $O(1)$ time, for any $e_i \in E, t \in [1, T] $, and at most one such check is made per timestep, the temporal walk $\mathcal{W}$ can be computed from $W$ in $O(T)$ time. Therefore, the total complexity is $O(\sum_{t \in [1, T]}\vert E_t \vert + \vert E \vert T + \vert E \vert \log \vert V \vert + T)$ time. As $\vert E \vert T \geq  \sum_{t \in [1, T]}\vert E_t \vert \geq T$ and, in order to explore all $\vert V \vert$ vertices $T \geq \vert V \vert$, we can simplify this to $O\left(T \left\vert E \right\vert \right)$ time, giving the statement.
\end{proof}

We provide an overview of the process outlined in Theorem \ref{thm:frequent_exploration} in Figure \ref{fig:Exploration_Overview}, and pseudocode in Algorithm \ref{alg:exploring_frequent}. We can simplify this for $f$-frequent temporal graphs to get the following simplified bounds.

\begin{algorithm}
    \caption{Algorithm for computing a temporal walk exploring a temporal graph $\mathcal{G} = (V, E_1, E_2, \dots, E_T)$ using the frequency of the edges as per Theorem \ref{thm:frequent_exploration}.}
    \label{alg:exploring_frequent}
    \begin{algorithmic}
        \Procedure{ExploreTemporal}{Temporal Graph $\mathcal{G} = (V, E_1, E_2, \dots, E_T)$, Start Vertex $v$}
            \State $G = FW(\mathcal{G})$ \Comment{Make the frequency weighted static graph $G$ of $\mathcal{G}$.}
            \State $T = \MWST(G)$ \Comment{Find the minimum weight spanning tree $T$ of $G$.}
            \State $e_1, e_2, \dots, e_M \gets$ \textsc{Explore}($T, v$) \Comment{Find a walk $W$ exploring $T$ starting at $v$ with at most $2n - 3$ edges.}
            \State $\mathcal{W} \gets \emptyset$ \Comment{Set $\mathcal{W}$ as the temporal walk exploring $\mathcal{G}$, initially empty.}
            \State $t \gets 1$ \Comment{Variable to store the timestep of the next step in $\mathcal{W}$, starting with the first.}
            \For{$e \in e_1, e_2, \dots, e_m$} \Comment{Iterate over the edges in the walk $W$ in order.}
                \While{$e \notin E_t$} \Comment{Increment the value of $t$ until we reach a timestep with $e$ active.}
                    \State $t \gets t + 1$
                \EndWhile
                \State Append $(e, t))$ to $\mathcal{W}$ \Comment{Append the tuple $(e, t)$ to the end of the walk $\mathcal{W}$.}
                \State $t \gets t + 1$ \Comment{Increment $t$ one last time to avoid adding multiple edges to $\mathcal{W}$ one the same timestep.}
            \EndFor
            \State \textbf{return} $\mathcal{W}$
        \EndProcedure
    \end{algorithmic}
\end{algorithm}

\begin{lemma}
    \label{lem:weakly_f_frequent}
    Any $f$-frequent temporal graph $\mathcal{G} = (V, E)$ with $n$ vertices can be explored, starting at any vertex $v \in V$, in $f(2n - 3)$ timesteps and, further, a temporal walk exploring $\mathcal{G}$ in $O(\vert E \vert (\log \vert V \vert + T))$ time.
\end{lemma}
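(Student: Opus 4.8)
The plan is to obtain this as essentially a specialisation of Theorem~\ref{thm:frequent_exploration}, but with the length estimate carried out directly against the number of edges of the tree-exploring walk rather than against the weight of the spanning tree. Applying Theorem~\ref{thm:frequent_exploration} verbatim gives an exploration in $2F$ timesteps, where $F$ is the weight of the minimum weight spanning tree on $FW(\mathcal{G})$; since $\mathcal{G}$ is $f$-frequent, every edge has frequency at most $f$, so $F \le f(n-1)$, and this already yields the (slightly weaker) bound $f(2n-2)$. To recover the claimed constant $2n-3$, I would re-run the construction from the proof of Theorem~\ref{thm:frequent_exploration}: take a walk $W = e_1, \dots, e_m$ exploring the minimum weight spanning tree with $m \le 2n-3$, which exists by Lemma~\ref{lem:explorling_trees}, and greedily lift it to a temporal walk $\mathcal{W} = (e_1, t_1), \dots, (e_m, t_m)$ in which each $t_i$ is the first timestep strictly after $t_{i-1}$ (or, for $i = 1$, the first timestep at all) at which $e_i$ is active.

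The key estimate is that $t_1 \le f_{e_1}$ and $t_i \le t_{i-1} + f_{e_i}$ for each $i \ge 2$: indeed, within the window of $f_{e_i}$ consecutive timesteps immediately following $t_{i-1}$ the edge $e_i$ is active at least once, by the definition of frequency together with the $f$-frequency of $\mathcal{G}$. Telescoping, $t_m \le \sum_{i=1}^{m} f_{e_i} \le m f \le (2n-3) f$, which is exactly the claimed bound; as in Theorem~\ref{thm:frequent_exploration} this tacitly assumes that the lifetime $T$ is long enough to contain such a walk and that $U(\mathcal{G})$ is connected, since otherwise no exploration exists. For the running time I would collect the costs already tallied in the proof of Theorem~\ref{thm:frequent_exploration}: $O(\sum_t |E_t|)$ to form $U(\mathcal{G})$, $O(|E|\,T)$ to compute all edge frequencies via Corollary~\ref{col:finding_the_total_frequency}, $O(|E| \log |V|)$ to build the minimum weight spanning tree via Lemma~\ref{lem:computing_spanning_tree}, $O(|V|)$ to extract $W$ via Lemma~\ref{lem:explorling_trees}, and $O(T)$ to lift $W$ to $\mathcal{W}$; since $\sum_t |E_t| \le |E|\,T$ and $|V| \le T$, the total collapses to $O(|E|(\log|V| + T))$.

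The only genuine subtlety, and hence the step I would be most careful about, is the off-by-one in the length bound: one must resist bounding through the tree weight (which loses a factor and gives $f(2n-2)$) and instead charge one factor of $f$ to each of the at most $2n-3$ edge-traversals performed by the tree walk. Everything else is bookkeeping inherited directly from Theorem~\ref{thm:frequent_exploration} and the folklore tree-exploration bound.
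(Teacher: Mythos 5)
Your proposal is correct and takes essentially the same route as the paper: the paper's proof likewise reuses the construction of Theorem~\ref{thm:frequent_exploration} on the spanning-tree walk of at most $2n-3$ edges and charges at most $f$ timesteps per traversed edge (waiting at most $f-1$ steps then crossing), rather than bounding through the tree weight, yielding $f(2n-3)$. Your explicit telescoping estimate $t_i \le t_{i-1} + f_{e_i}$ and the runtime bookkeeping collapsing to $O(\vert E \vert(\log \vert V \vert + T))$ match the paper's argument.
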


\begin{proof}
    Utilising the same approach as Theorem \ref{thm:frequent_exploration}, we construct the temporal walk $\mathcal{W}$ exploring $\mathcal{G}$. Note that each edge has a frequency of at most $f$, thus the weight of the minimum weight spanning tree is at most $nf$. Observing that, by Lemma \ref{lem:explorling_trees}, we can explore this tree in $O(2n - 3)$ steps, the length of the temporal walk $\mathcal{W}$ is at most $f(2n - 3)$, corresponding to waiting $f - 1$ steps at each of the $2n - 3$ edges traversed in the walk, followed by one timestep traversing the edge.
\end{proof}

\begin{corollary}
    \label{col:weakly_f_symetric_frequent}
    Any $f$-frequent symmetric directed temporal graph $\mathcal{G} = (V, E)$ with $n$ vertices can be explored, starting at any vertex $v \in V$, in $f(2n - 3)$ timesteps and, further, a temporal walk exploring $\mathcal{G}$ in $O(\vert E \vert (\log \vert V \vert + T))$ time.
\end{corollary}

\begin{proof}
    Note that the same approach as in Lemma \ref{lem:weakly_f_frequent} can be applied directly to $f$-frequent symmetric directed temporal graphs as, rather than traversing the same $f$-frequent edge twice, we traverse two $f$-frequent edges once each.
\end{proof}

Recalling that an edge with regularity $r_e$ is, by definition, $r_e$ frequent, we get the following.

\begin{corollary}
    \label{col:regular_edges}
    Any $r$-regular temporal graph with $n$ vertices can be explored in $r(2n - 3)$ timesteps from any starting vertex.
\end{corollary}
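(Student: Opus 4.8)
The plan is to derive this corollary as an immediate consequence of Lemma~\ref{lem:weakly_f_frequent} by reducing the regular case to the frequent case. The crucial observation, already recorded in the Preliminaries, is that for every edge $e$ we have $r_e \geq f_e$: if $e$ is $r_e$-regular then in particular $e$ cannot be absent for $r_e$ consecutive timesteps without being absent again at distance $r_e$, which forces its frequency to be at most $r_e$. Hence every $r$-regular temporal graph is automatically an $r$-frequent temporal graph.

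The steps I would carry out are as follows. First, I would restate that an $r$-regular graph $\mathcal{G} = (V, E_1, \dots, E_T)$ satisfies $f_e \leq r_e \leq r$ for all $e \in E$, so $\mathcal{G}$ is $r$-frequent by definition. Second, I would simply invoke Lemma~\ref{lem:weakly_f_frequent} with $f = r$: that lemma guarantees a temporal walk exploring $\mathcal{G}$ from any starting vertex $v$ of length at most $f(2n-3) = r(2n-3)$, obtained by building the minimum weight spanning tree of $FW(\mathcal{G})$ (whose weight is at most $nr$), exploring that tree with a walk of at most $2n-3$ edges via Lemma~\ref{lem:explorling_trees}, and then waiting at most $r-1$ timesteps before each traversal. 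This yields the claimed bound of $r(2n-3)$ timesteps, concluding the argument.

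There is essentially no serious obstacle here; the corollary is a one-line specialisation. The only point requiring a moment of care is to confirm the inequality $r_e \geq f_e$ holds in full generality (including edge cases near the boundaries $t \in \{1,\dots,r\}$ and $t \in \{T-r,\dots,T\}$, where the regularity condition is only imposed on the interior window $[r+1, T-r]$), since the corollary silently relies on it; but this is exactly the relationship asserted in the Preliminaries, so I would just cite it rather than reprove it. If one wished to be fully self-contained one could also remark that the symmetric directed variant follows identically via Corollary~\ref{col:weakly_f_symetric_frequent}, traversing two $r$-regular edges once each rather than one edge twice.
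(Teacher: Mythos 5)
Your proposal is correct and matches the paper exactly: the corollary is obtained by noting that every $r$-regular edge is $r$-frequent (so the graph is $r$-frequent) and then invoking Lemma~\ref{lem:weakly_f_frequent} with $f = r$.
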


\begin{corollary}
    Any $r$-regular symmetric directed temporal graph with $n$ vertices can be explored in $r (2 n - 3)$ timesteps from any starting vertex.
\end{corollary}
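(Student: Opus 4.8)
The plan is to reduce this statement immediately to Corollary~\ref{col:weakly_f_symetric_frequent} by observing that $r$-regularity is a strictly stronger condition than $r$-frequency, even in the symmetric directed setting. Concretely, I would first recall the definitional fact, stated in the preliminaries and already exploited for Corollary~\ref{col:regular_edges}, that $r_e \geq f_e$ for every edge $e$: if $e$ is $r_e$-regular then in particular it cannot be absent for $r_e$ or more consecutive timesteps without forcing a contradiction with the regularity constraint (an edge that is inactive at some $E_t$ and at $E_{t+r_e}$ must be inactive at $E_{t-r_e}$, and iterating this pattern cannot cover an arbitrarily long run once we are past the transient window), so the frequency of $e$ is at most its regularity. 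Since this argument is purely per-edge, it is insensitive to whether the edge sets are undirected or directed, so an $r$-regular symmetric directed temporal graph is in particular an $r$-frequent symmetric directed temporal graph.

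The second and final step is then simply to invoke Corollary~\ref{col:weakly_f_symetric_frequent} with $f = r$: any $f$-frequent symmetric directed temporal graph on $n$ vertices can be explored from any starting vertex $v$ in $f(2n-3)$ timesteps, and substituting $f = r$ yields the claimed bound of $r(2n-3)$ timesteps. The underlying mechanism inherited from that corollary is that one builds a minimum weight spanning tree of the underlying (symmetric directed) graph with edge weights equal to the edge frequencies, each at most $r$, explores that tree in at most $2n-3$ edge traversals via Lemma~\ref{lem:explorling_trees}, and replaces each traversal by a wait of at most $r-1$ timesteps followed by one step across the edge — crucially, in the symmetric directed case each undirected tree step is realised by a directed edge in the appropriate orientation, which exists by symmetry and is itself $r$-frequent.

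I do not expect a genuine obstacle here, since the result is a direct specialisation. The only point that warrants a sentence of care is verifying that the inequality $r_e \geq f_e$, and the observation ``every $x$-regular temporal graph is an $x$-frequent temporal graph'' from the preliminaries, transfer verbatim to symmetric directed temporal graphs; this holds because both the frequency and regularity of an edge are defined solely in terms of the membership pattern $\{t : e \in E_t\}$ and make no reference to edge orientation. Once that is noted, the corollary follows in one line.
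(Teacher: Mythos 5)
Your proposal is correct and matches the paper's (implicit) proof exactly: the paper also derives this corollary by noting that an edge of regularity $r_e$ is by definition $r_e$-frequent, so an $r$-regular symmetric directed temporal graph is $r$-frequent, and then applying Corollary~\ref{col:weakly_f_symetric_frequent} with $f = r$. Your extra remarks on the orientation-insensitivity of the frequency/regularity definitions and on the spanning-tree mechanism are consistent with the paper and add nothing that changes the argument.
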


We can contrast this against the natural lower bound, showing that there exists an $r$-regular graph requiring $r(2n - 5) + 1$ timesteps to explore.

\begin{theorem}
    For any regularity $r \in \mathbb{N}$ and number of vertices $n \in \mathbb{N}$, there exists some $r$-regular temporal graph requiring $r(2n - 5) + 1$ timesteps to explore.
\end{theorem}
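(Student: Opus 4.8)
The plan is to exhibit a single star-shaped $r$-regular temporal graph on which no exploration can be fast, forcing slowness by letting every star edge except one become active only at timesteps congruent to $1$ modulo $r$.

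\textbf{Construction.} I would let the underlying graph be the star with centre $c$, one distinguished leaf $w$, and further leaves $\ell_1, \dots, \ell_{n-2}$, so there are $n$ vertices in total; I assume $n \ge 3$, since for $n \le 2$ the quantity $r(2n-5)+1$ is non-positive and there is nothing to prove. Fix a lifetime $T = \max\{\, r(2n-5)+1,\ 3r \,\}$, declare $\{c,w\}$ active in every timestep, and declare each $\{c,\ell_i\}$ active in exactly those timesteps $t$ with $t \equiv 1 \pmod r$. Then every timestep contains the active edge $\{c,w\}$; the edge $\{c,w\}$ has regularity $1$; and each $\{c,\ell_i\}$ has an activity pattern of exact period $r$ — for $r \ge 2$ one checks it is $r$-regular but not $r'$-regular for any $r' < r$, using that $1 + r' \not\equiv 1 \pmod r$ whenever $1 \le r' \le r-1$, which is why the lifetime is taken to be at least $3r$, so that the witnessing timesteps lie in $[r+1, T-r]$. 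Hence the graph is $r$-regular, and I would take $w$ to be the start vertex. (When $r = 1$ the construction is a static star and the argument below goes through unchanged.)

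\textbf{Lower bound.} For completeness one records that the bound is attained: from $w$, traverse $\{c,w\}$ at timestep $1$, then for $k = 1, \dots, n-2$ use timesteps $(2k-1)r+1$ and $2kr+1$ to go $c \to \ell_k \to c$, skipping the final return, so $\ell_{n-2}$ is reached at timestep $(2(n-2)-1)r+1 = r(2n-5)+1$. For the lower bound, let $\mathcal{W}$ be any exploration from $w$, written as a vertex sequence $v_0 = w, v_1, \dots, v_m$ with step $v_{j-1}v_j$ taken at timestep $\tau_j$, $\tau_1 < \dots < \tau_m$. I argue in two stages. First, $\mathcal{W}$ traverses the leaf edges $\{c,\ell_1\}, \dots, \{c,\ell_{n-2}\}$ at least $2n-5$ times in total: each $\ell_i$ has degree $1$, so any index $j$ with $v_j = \ell_i$ satisfies $v_{j-1} = c$ (with $j \ge 1$, because $v_0 = w \ne \ell_i$), and also $v_{j+1} = c$ unless $j = m$; since each of the $n-2$ leaves is visited at least once and at most one such visit can be the terminal vertex $v_m$, we obtain at least $n-2$ ``inbound'' and at least $n-3$ ``outbound'' leaf-edge traversals, hence at least $2n-5$ in all, at $2n-5$ distinct timesteps, each $\equiv 1 \pmod r$. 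Second, the step $v_0 v_1$ leaves $w$ and so traverses $\{c,w\}$, not a leaf edge; thus no leaf edge is traversed at timestep $1$, nor (for $r \ge 2$) at any timestep in $\{2, \dots, r\}$, since the walk is not at $c$ at timestep $1$ and those timesteps are not $\equiv 1 \pmod r$. Consequently the $2n-5$ leaf-edge traversals occupy distinct timesteps drawn from $\{r+1, 2r+1, 3r+1, \dots\}$, whose $k$-th smallest element is $kr+1$; taking $k = 2n-5$ shows the last such traversal occurs no earlier than timestep $r(2n-5)+1$, so $\vert \mathcal{W} \vert = \tau_m \ge r(2n-5)+1$.

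\textbf{Main obstacle.} The genuinely load-bearing step is the ``at least $2n-5$ leaf-edge traversals'' count: it must withstand walks that revisit $w$ or the leaves, or visit the leaves in an awkward order, which the degree-$1$/terminal-vertex bookkeeping above does handle — but it is essential that the start vertex is \emph{not} a leaf (this is exactly why the bound is $2n-5$ rather than $2n-3$), and that timestep $1$ is unusable for a leaf traversal (this supplies the trailing ``$+1$''). Everything else is routine boundary bookkeeping: verifying that each leaf edge has regularity exactly $r$, not a proper divisor of $r$ (which is what forces the lower bound on $T$), and handling the degenerate ranges $n \le 2$ and $r = 1$ separately.
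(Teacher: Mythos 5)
Your proposal is correct and uses essentially the same construction as the paper: a star whose edge to the starting leaf is always active while every other edge is active only at timesteps $\equiv 1 \pmod r$, forcing roughly two traversals of restricted edges per remaining leaf. Your counting of the $2n-5$ leaf-edge traversals at distinct timesteps from $\{r+1, 2r+1, \dots\}$ is a somewhat more careful bookkeeping of the same idea the paper argues informally (waiting $r-1$ steps per traversal, $2r$ per leaf round trip minus the final return), so no substantive difference in approach.
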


\begin{proof}
    Consider the star graph $S_n = (V, E)$, defined as the graph containing the vertex set $v_1, v_2, \dots, v_n$ and the edge set $E = \{(v_1, v_i) \mid i \in [2, n]\}$. Let $\mathcal{S}_n = (V, E_1, E_2, \dots, E_{f(2n - 1)})$ be the temporal graph such that at timestep $E_t$ either $E_{t} = E$, if $t \bmod r\equiv 1$, or $\{v_2, v_1\}$ otherwise. Further, let us assume that the agent starts at the vertex $v_2$.
    Observe that to explore this graph, the agent must traverse all but two edges in the graph twice, those being the edge $(v_2, v_1)$ and some other edge $(v_1, v_i)$. Note that we may assume, without loss of generality, that the agent traverses the edge $(v_2, v_1)$ in the first timestep. Once the agent traverses an edge, it must wait $r - 1$ timesteps for this edge to reactivate. Thus, once the agent has reached the vertex $v_1$ after traversing the edge $(v_{i - 1}, v_1)$, for some $v_{i - 1} \in V \setminus\{v_1\}$, reaching the vertex $v_i$ from $v_1$ and returning requires $2r$ timesteps, corresponding to $r - 1$ timesteps waiting for $(v_1, v_i)$ to become active, one timestep to traverse $(v_1, v_i)$, $r - 1$ timesteps for $(v_i, v_1)$ to become active, and one timestep to traverse the edge. As there are $n - 2$ such vertices, and recalling that we do not need to return to $v_1$ after reaching the last vertex in the exploration, we get a total of $2r(n - 3) + r + 1 = r(2n - 5) + 1$ timesteps needed to explore this temporal graph.
\end{proof}

\begin{corollary}
    For any frequency $f \in \mathbb{N}$ and number of vertices $n \in \mathbb{N}$, there exists some $f$-frequent temporal graph requiring $f(2n - 5) + 1$ timesteps to explore.
\end{corollary}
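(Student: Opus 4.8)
The plan is to obtain this essentially for free from the preceding lower bound for $r$-regular graphs, together with the containment ``$x$-regular $\Rightarrow$ $x$-frequent'' noted in the preliminaries. Concretely, I would re-use the temporal star $\mathcal{S}_n$ from the proof of the previous theorem, but with its period parameter fixed to $f$ rather than to a generic $r$: that is, $\mathcal{S}_n = (V, E_1, \dots, E_{f(2n-1)})$ on the star $S_n$, where $E_t = E$ when $t \equiv 1 \bmod f$ and $E_t = \{(v_1,v_2)\}$ otherwise, with the agent starting at $v_2$.

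The first step is to verify that this graph is actually $f$-frequent. The edge $(v_1,v_2)$ is active at every timestep and hence has frequency $1$; every other edge $(v_1,v_i)$ with $i \ge 3$ is active exactly at the timesteps $t \equiv 1 \bmod f$, so by Lemma~\ref{lem:getting_frequency} its frequency equals one plus the length of the longest run of timesteps missing it, which is $(f-1)+1 = f$. Thus $f_e \le f$ for every edge, so $\mathcal{S}_n$ is $f$-frequent (indeed with maximum edge frequency exactly $f$ whenever $n \ge 3$, which gives tightness of the construction as well).

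The second step is to carry the counting argument over verbatim. The only structural fact the previous proof uses is that once the agent traverses an edge at timestep $t$, that edge is inactive until timestep $t+f$, forcing $f-1$ idle timesteps before it can be retraversed; this is immediate from the definition of $\mathcal{S}_n$. Given this, exploration from $v_2$ must traverse every edge except $(v_1,v_2)$ and one further leaf edge twice, each ``out and back'' trip to a non-final leaf costs $2f$ timesteps, the final leaf costs $f$ (no return needed), and the opening move along $(v_1,v_2)$ costs $1$, for a total of $2f(n-3) + f + 1 = f(2n-5)+1$.

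I do not expect a genuine obstacle here; the two points to be careful about are (i) confirming that the maximal missing-gap for the leaf edges is exactly $f-1$ (and not $f$), so that the graph really is $f$-frequent, and (ii) checking that the reactivation-delay property the lower bound relies on is a property of this particular snapshot sequence, not something that would follow from $f$-frequency in general. Alternatively one could dispatch the corollary in a single line by invoking the previous theorem and ``$r$-regular $\Rightarrow$ $r$-frequent'' with $r=f$, but writing out the construction keeps the tightness of the bound visible.
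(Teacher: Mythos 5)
Your proposal is correct and matches the paper's intended argument: the corollary is stated without proof precisely because the star construction of the preceding theorem, instantiated with period $f$, is $f$-frequent (via the ``$r$-regular $\Rightarrow$ $r$-frequent'' containment noted in the preliminaries), and the counting of $1 + 2f(n-3) + f = f(2n-5)+1$ timesteps carries over verbatim, exactly as you describe. Your explicit verification that the leaf edges have frequency exactly $f$ and that the reactivation delay is a property of the snapshot sequence itself is a harmless (and slightly more careful) elaboration of the same route.
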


\section{Restricted classes with Frequent Edges}
\label{sec:motivating_results}

In this section, we provide a set of results on some subsets of temporal graphs which may be reduced to temporal graphs with frequent edges.

\subsection{Public Transport Graphs}

We first consider \emph{public transport graphs}. We use such temporal graphs to model public transport timetables. Here, the temporal graph is composed of a set of temporal walks, each representing some transport line with vertices representing stops. For example, one may represent the bus stops and lines in a city by the vertices and walks respectively. We assume that once a walk is completed, it will be repeated immediately, noting that we can simulate a delay by adding a self-loop at the end of the walk, active at an appropriate timestep. 

Formally, we define a public transport graph by the set of temporal walks $\mathcal{W}_1$, $\mathcal{W}_2$, $\dots$, $\mathcal{W}_m$ over a common set of vertices $V$ and given lifespan $T$. Letting $L_i = \vert \mathcal{W}_i\vert$ be the last timestep containing some edge in the $i^{th}$ walk, and $\mathcal{W}_i[j]$ denote the $j^{th}$ edge in walk $\mathcal{W}_i[j]$, we have, at timestep $t \in [1, T]$, the edge $e \in E_t$ iff $\exists i \in [1, m], j \in [1, \vert \mathcal{W}_i \vert]$ s.t. $e = \mathcal{W}_i[j]$ and $\mathcal{T}_i[j] = t \bmod L_i$. Informally, $e$ is active in timestep $t$ if there exists some $\mathcal{W}_i$ such that $e$ is active in timestep $t$ (modulo the length of the walk). Note that an edge may be used in multiple routes, possibly even in the same timestep.

\begin{theorem}
    Given a public transport graph $\mathcal{G} = (V, E_1, E_2, \dots, E_T)$ composed of the routes $\mathcal{W}_1, \mathcal{W_2}, \dots, \mathcal{W}_m$, such that $U(\mathcal{G})$ is a single connected component, there exists an exploration of $\mathcal{G}$ requiring $(2 n - 3)\max_{i \in [1, m]} L_i$ timesteps, where $L_i = \vert W_i \vert$.
\end{theorem}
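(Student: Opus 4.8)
The plan is to show that every public transport graph is an $L$-frequent temporal graph, with $L = \max_{i\in[1,m]} L_i$, and then simply invoke Lemma~\ref{lem:weakly_f_frequent}. So the real content is a single observation about the definition of a public transport graph, after which the bound falls out of the frequent-edge machinery already developed.

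First I would prove the frequency bound. Fix an edge $e$ of $U(\mathcal{G})$. Since $U(\mathcal{G})$ is the union of the temporal walks $\mathcal{W}_1, \dots, \mathcal{W}_m$, there is a route index $i$ and a position $j$ with $e = \mathcal{W}_i[j]$. By the definition of a public transport graph, $e \in E_t$ whenever $t \equiv \mathcal{T}_i[j] \pmod{L_i}$; that is, the activations of $e$ coming from route $i$ alone are spaced exactly $L_i$ timesteps apart. Hence every window $[t, t+L_i-1]$ of $L_i$ consecutive timesteps contains a timestep in which $e$ is active, and since $L_i \le L$ the same holds for every window of length $L$. Therefore $f_e \le L$. (An edge occurring in several routes, or several times within one route, is only activated more often, which can lower $f_e$ but never raise it; and windows truncated by the boundary $t = T$ likewise only make the required gap shorter.) Consequently $\mathcal{G}$ is $L$-frequent.

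Next, by hypothesis $U(\mathcal{G})$ is a single connected component, so $\mathcal{G}$ meets the hypotheses of Lemma~\ref{lem:weakly_f_frequent} with $f = L$ and $n = |V|$. Applying that lemma yields a temporal walk, from any chosen starting vertex, exploring $\mathcal{G}$ in at most $L(2n-3) = (2n-3)\max_{i\in[1,m]} L_i$ timesteps, which is exactly the claimed bound. Concretely this walk is built as in Theorem~\ref{thm:frequent_exploration}: form the frequency-weighted underlying graph $FW(\mathcal{G})$, take a minimum weight spanning tree (of weight at most $nL$), traverse it in at most $2n-3$ edge-crossings by Lemma~\ref{lem:explorling_trees}, and replace each crossing by ``wait for the edge to reappear, then cross,'' at a cost of at most $L$ timesteps per crossing.

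I do not expect a substantive obstacle: the only point needing care is the implicit assumption that the lifespan $T$ is long enough to accommodate an exploration of length $(2n-3)L$. This is harmless since the routes repeat indefinitely, so one may take $T \ge (2n-3)L$ without loss of generality (equivalently, the statement is to be read for public transport graphs of sufficient lifespan). Everything else is a direct specialisation of the frequent-edge results, with the work concentrated entirely in translating the modular activation rule ``$\mathcal{T}_i[j] = t \bmod L_i$'' into the bound $f_e \le L$.
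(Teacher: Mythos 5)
Your proposal is correct and follows essentially the same route as the paper: bound the frequency of each edge by the length of its route (hence by $\max_i L_i$) and then invoke the frequent-edge exploration result (Theorem \ref{thm:frequent_exploration} / Lemma \ref{lem:weakly_f_frequent}). Your write-up just spells out the frequency bound more explicitly than the paper's one-line argument does.
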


\begin{proof}
    Observe that, by construction, each edge must have frequency at least $L_i$. Therefore, using Theorem \ref{thm:frequent_exploration} and Lemma \ref{lem:weakly_f_frequent}, we get the bound.
\end{proof}

\subsection{Sequential Connection Graphs}

We now consider the class of graphs where each vertex activates the set of incident edges in a fixed, repeating order. One can think of these as graphs in which each vertex allows connections to one neighbour in turn. For example, a radio receiver switching between a set of bands in a regular order.

Formally, a symmetric directed temporal graph $\mathcal{G} = (V, E_1, E_2, \dots, E_T)$ is a \emph{sequential connection graph} if there exists, for every vertex $v \in V$ there exists a permutation of the incoming edges to $v$, $P_v = e_1, e_2, \dots, e_{\Delta(v)}$, such that at timestep $t$, $P_v[t \bmod \Delta(v)] \in E_t$ and, $\forall e \in \inedge(v) \setminus\{P_v[t \bmod \Delta(v)]\}$, $e \notin E_t$.

\begin{theorem}
    Given a sequential connection graph $\mathcal{G} = (V, E_1, E_2, \dots, E_T)$, $\mathcal{G}$ can be explored in $2 \sum_{v \in V} \Delta(v) = 4 \vert E \vert$ timesteps, where $U(\mathcal{G}) = (V, E)$.
\end{theorem}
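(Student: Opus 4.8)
The plan is to build a spanning tree of $U(\mathcal{G})$, find a walk of length at most $2n-3$ traversing it (Lemma~\ref{lem:explorling_trees}), and then bound the time spent waiting for each edge of the walk to become active. The key observation is that in a sequential connection graph, although an individual edge $e = (u,v)$ may have a large frequency — up to $\Delta(v)$, the size of $v$'s in-edge permutation — the \emph{sum} of frequencies over all edges incident to a single vertex is controlled: the in-edges of $v$ cycle through $P_v$ in order, so over any $\Delta(v)$ consecutive timesteps each in-edge of $v$ is active exactly once. Thus waiting at $v$ to leave along any specific out-edge costs at most $\Delta(v)$ timesteps, but we should argue more carefully using the cyclic structure rather than the crude per-edge frequency bound of Theorem~\ref{thm:frequent_exploration}, since summing $\Delta(v)$ over all $2n-3$ walk edges would give roughly $(2n-3)\max_v\Delta(v)$, not $4|E|$.

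First I would fix a spanning tree $T$ of $U(\mathcal{G})$ and a walk $W = e_1 e_2 \cdots e_m$ exploring $T$ with $m \le 2n-3$; a standard choice is the closed walk obtained from a DFS traversal, which uses each tree edge at most twice. Then I would convert $W$ to a temporal walk greedily as in Theorem~\ref{thm:frequent_exploration}: at step $i$, having arrived at the start vertex of $e_i$ at time $t_{i-1}$, wait for the first timestep $t_i > t_{i-1}$ at which $e_i$ is active. The crucial point is that each tree edge $\{u,v\}$ is traversed at most twice in $W$ — once as $(u,v)$ and once as $(v,u)$ — and the $(u,v)$-traversal is a step whose target has incoming edge $(u,v)$... wait, I mean the step traversing from $u$ to $v$ corresponds to the in-edge $(u,v)$ of $v$, so its activation is governed by $P_v$. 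Since the in-edges of $v$ are activated round-robin with period $\Delta(v)$, any single traversal into $v$ costs at most $\Delta(v)$ timesteps of waiting-plus-moving. Summing: each vertex $v$ is entered along $W$ at most $\deg_T(v) \le \Delta(v)$... that's still not tight enough directly, so instead I would charge each traversal-into-$v$ a cost of $\Delta(v)$ and count traversals: the total number of directed traversals in $W$ is $m \le 2n-3 < 2(n-1) = 2 \cdot (\text{number of tree edges})$, and each tree edge contributes one traversal into each of its two endpoints, so the total cost is at most $\sum_{\text{tree edges }\{u,v\}} (\Delta(u) + \Delta(v)) \le \sum_{v\in V}\Delta(v)\cdot 1$ — no. Let me restructure: the correct accounting is $\sum_{i=1}^m \Delta(\text{target of }e_i)$, and since each of the $\le n-1$ tree edges is traversed in each direction at most once, this sum is at most $\sum_{\{u,v\}\in E(T)}(\Delta(u)+\Delta(v))$. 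The obstacle is that this can exceed $\sum_v \Delta(v)$ for star-like trees; so the honest route is the one the paper flags: apply Theorem~\ref{thm:frequent_exploration} using a \emph{minimum weight} spanning tree on $FW(\mathcal{G})$, and separately prove that the MWST weight is at most $2|E| = \sum_v \Delta(v)$.

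So the real plan is: (1) show that for a sequential connection graph, $f_e \le \Delta(v)$ for $e=(u,v)$ an in-edge of $v$ — immediate from the round-robin definition; (2) exhibit a single spanning tree whose $FW$-weight is at most $\sum_{v\in V}\Delta(v) = 2|E|$, hence the MWST weight $F \le 2|E|$; for (2) I would orient each tree edge toward one endpoint and bound the weight of the tree by $\sum_v \Delta(v)$ by choosing, at each vertex, which incident edges to "charge" to it — concretely, root the tree and charge each non-root edge to its child endpoint $v$, costing $f_e \le \Delta(v)$, so the total is $\sum_{v \ne \text{root}} \Delta(v) \le 2|E|$; (3) invoke Theorem~\ref{thm:frequent_exploration} to get exploration in $2F \le 4|E|$ timesteps. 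The main obstacle is step (2): the greedy DFS-walk bound naively double-counts high-degree vertices, so one must be careful to either use the per-child charging above (which gives the clean $2|E|$) or directly analyze the DFS walk with the observation that each directed traversal into $v$ costs $\le \Delta(v)$ and the number of such traversals summed with their costs telescopes correctly when edges are charged to children. Either way the bound $2\sum_{v\in V}\Delta(v) = 4|E|$ follows, since $\sum_{v \in V}\Delta(v) = 2|E|$ by handshaking.
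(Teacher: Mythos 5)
You follow the same route as the paper's own (very terse) proof---compute the per-edge frequencies $f_{(u,v)} = \Delta(v)$ and hand them to Theorem~\ref{thm:frequent_exploration}---and you even supply the accounting step the paper leaves implicit, namely rooting a spanning tree and charging each edge to its child so that the tree weight is at most $\sum_{v \in V}\Delta(v) = 2\vert E \vert$. However, the obstacle you raised and then set aside is exactly where this chain breaks. A sequential connection graph is a \emph{symmetric directed} temporal graph, and the two orientations of a tree edge $\{u,v\}$ have different frequencies: $(u,v)$ is an in-edge of $v$ and recurs every $\Delta(v)$ timesteps, while $(v,u)$ recurs every $\Delta(u)$ timesteps. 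The walk of Lemma~\ref{lem:explorling_trees} traverses (almost) every tree edge in both directions, so the greedy construction in the proof of Theorem~\ref{thm:frequent_exploration} costs up to $\sum_{\{u,v\} \in T}\bigl(\Delta(u) + \Delta(v)\bigr) = \sum_{v} \deg_T(v)\,\Delta(v)$; charging each edge only to its child pays for the downward traversal but not for the return towards the root, so ``$2F \leq 4\vert E\vert$'' with child-charged weights is not a bound that Theorem~\ref{thm:frequent_exploration} actually delivers. The star makes this concrete: its only spanning tree is the star itself, your charging gives $F = n-1$ and hence a claimed $2(n-1)$, yet every return from a leaf to the hub must wait for that leaf's unique slot in the hub's length-$(n-1)$ round robin, so the tree-walk schedule takes $\Theta(n^2)$ timesteps, far exceeding $4\vert E\vert = 4(n-1)$. (The symmetric-directed variants the paper does prove, Corollary~\ref{col:weakly_f_symetric_frequent}, only give $\max_e f_e \cdot (2n-3) = \Delta_{\max}(2n-3)$, which is again $\Theta(n^2)$ on the star.)

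So the reduction to raw frequencies alone does not establish the $4\vert E\vert$ bound; you are in the same position as the paper's one-line proof, and the missing ingredient is an argument that uses the round-robin structure itself rather than only each edge's recurrence period. On the star, for instance, the bound is rescued by pipelining: the hub's in-edges activate in a fixed cyclic order, so one can visit the leaves in (essentially) that order, timing each arrival at a leaf one step before its return edge activates; successive returns then use activation slots two apart and all leaves are covered in $O(n)$ timesteps rather than $\Theta(n)$ per leaf. A correct proof of the stated theorem needs this kind of per-vertex accounting along the spanning tree (the out-and-back excursions below a vertex $v$ should cost $O(\Delta(v))$ in total, summing to $O(\vert E \vert)$), or some other device that handles the two orientations of each edge separately; it does not follow from Theorem~\ref{thm:frequent_exploration} used as a black box together with the child-charging bound on the tree weight.
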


\begin{proof}
    Note that the frequency of the edge $(v_i, v_j)$ is $\Delta(v_j)$. Therefore, by Theorem \ref{thm:frequent_exploration}, we get the bound.
\end{proof}

\subsection{Broadcast Networks}

In distributed computing, the \emph{broadcast} model of communication is one in which, at each communication round, every vertex either sends the same message to every neighbour, or sends no message \cite{korhonen2018deterministic}. We can use a temporal graph to model communication in these networks by representing each communication round as a timestep in the temporal graph, with messages from vertex $v$ to vertex $u$ in timestep $t$ represented by having the edge $(v, u)$ in $E_t$. We call these graphs \emph{Broadcast Networks}.

We have two additional restrictions on these temporal graphs. First, to model the restrictions on the broadcast model, we require that each time a vertex broadcasts a message, it does so to all neighbours. 
Formally, $\forall v \in V, t \in [1, T]$, either $\{(v, u) \mid u \in N(v) \} \subseteq E_t$ or $\{(v, u) \mid u \in N(v) \} \cap E_t = \emptyset$. Second, we assume that a vertex will only send the next message once it has received some sort of acknowledgement from every neighbour, either in the form of a simple acknowledged message, or a full message from the neighbour containing the round in which the neighbour has sent the message. This avoids a desynchronisation of the network, where one vertex progresses despite no follow-up information being given. To simplify the model, we assume that if two neighbours send a message on the same round, both are consider in sync, and thus do not have to wait for the other to send the next message. Formally, we say that an edge $(v, u)$ may be active in timesteps $t_1$ and $t_2$, where $t_1 < t_2$ iff there exists some $t' \in [t_1, t_2 - 1]$ where $(u, v) \in E_{t'}$.

For convenience, we say a vertex $v$ is active in timestep $E_t$ iff $\{(v, u) \mid u \in N(v) \} \subseteq E_t$, and use the notation $A_t$ to denote the set of vertices active in timestep $t$. Thus, the second condition can be rewritten as $v \in A_{t_1} \cap A_{t_2}$ where $t_1 < t_2$ iff $\forall u \in N(v)$, $\exists t_u \in [t_1, t_2 - 1]$ s.t. $u \in A_{t_u}$.


We combine these restrictions to derive the frequencies of the edges.

\begin{lemma}
    Given a broadcast network $\mathcal{G} = (V, E_1, E_2, \dots, E_T)$, the edge $(u, v)$ has a frequency of $d n$, where $n$ is the number of vertices in $V$ and $d$ is the diameter of $U(\mathcal{G})$.
\end{lemma}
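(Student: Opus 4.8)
The plan is to bound, for an arbitrary edge $(u,v)$, the maximum number of consecutive timesteps in which $(u,v)$ is inactive, and show this is at most $dn - 1$, so the frequency is at most $dn$. The key is to chain together the two structural constraints of broadcast networks: (i) a vertex fires to all neighbours simultaneously or not at all, and (ii) between two firings of $v$, every neighbour of $v$ must itself fire at least once. I would first reformulate the question entirely in terms of the activation sets $A_t$: the edge $(u,v)$ is active at $t$ precisely when $u \in A_t$, so $f_{(u,v)}$ is one more than the longest run of timesteps with $u \notin A_t$. Thus it suffices to show that $u$ is active at least once every $dn$ timesteps, i.e. that no vertex can stay silent for $dn$ consecutive steps.

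The heart of the argument is a "propagation" or potential-style claim: I would show that over any window of $n$ consecutive timesteps, some vertex must fire, and more strongly, that activity cannot remain confined to a shrinking frontier. Concretely, fix a window and suppose for contradiction that $u$ does not fire in it. Using constraint (ii), each time any neighbour $w$ of $u$ fires, it is "using up" its allowance and must wait for all of its neighbours to fire before firing again; propagating this, I would argue that the set of vertices that have fired so far can only grow, and that to let a vertex fire twice one must have cycled through an ever-larger set of vertices — formally, I expect an induction on distance from $u$: if $u$ is to be reactivated, first all vertices at distance $1$ must fire, which requires all vertices at distance $2$ to have fired, and so on up to distance $d$, with the whole graph ($\le n$ vertices) needing to fire, and each "layer" of this cascade costing at least one timestep but the layers nesting so that the total is bounded by roughly $d \cdot n$. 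I would set up a monovariant: let $\Phi_t$ count (weighted by something like distance-to-$u$) the vertices that still "owe" a firing before $u$ can go again, show $\Phi$ strictly decreases on timesteps where a new vertex fires, and bound how often a timestep can pass with no new firing.

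The main obstacle I anticipate is getting the cascade bookkeeping tight enough to land exactly at $dn$ rather than some looser polynomial like $n^2$: one has to be careful that the "every neighbour must fire in between" condition, when unrolled across $d$ layers, does not force a product that blows up, and that timesteps where several vertices fire at once (which constraint (i) and the tie-breaking rule permit, and even encourage) are counted correctly — these are exactly the timesteps that keep the bound linear in $n$ per layer rather than worse. I would handle this by arguing that within each BFS layer from $u$, the vertices of that layer can be made to fire in at most $n$ timesteps total (since firing them is only blocked by waiting for already-fired deeper vertices, not by each other in a way that serialises beyond $n$ steps), and summing over the $\le d$ layers gives $dn$. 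A secondary, more routine check is the boundary/edge-effects of the definition of frequency near the ends of the lifetime $[1,T]$, handled exactly as in Lemma~\ref{lem:getting_frequency}.
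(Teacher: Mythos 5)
There is a genuine gap, and it lies exactly where you put the weight of your argument: the forced-cascade claim. In this model the acknowledgement rule only says when a vertex \emph{may} broadcast again (between two broadcasts of a vertex $w$, every neighbour of $w$ must broadcast at least once); it never obliges any vertex to broadcast. Consequently, ``if $u$ is to be reactivated, first all vertices at distance $1$ must fire, which requires all vertices at distance $2$ to have fired, and so on'' is false: a \emph{single} firing of a distance-$1$ vertex inside your window carries no obligation for distance-$2$ vertices (the in-between condition attaches only to a pair of firings of the same vertex, and the earlier firing may predate the window, with $u$'s own last broadcast already serving as the required acknowledgement). Likewise ``each BFS layer can be made to fire in at most $n$ timesteps'' has no basis: nothing in the model forces promptness of any particular vertex or layer, so no layer-by-layer schedule argument can bound the gap. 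The only source of guaranteed activity is the blanket assumption that every timestep contains at least one active edge, and your sketch explicitly leaves open the step that would exploit it (``bound how often a timestep can pass with no new firing''), which is precisely the crux.

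The missing idea is the counting bound in the \emph{opposite} direction, which is how the paper argues. Fix a maximal interval during which the broadcasting endpoint $u$ of the edge $(u,v)$ is silent. A neighbour of $u$ can broadcast at most once in this interval (a second broadcast would force $u$ to broadcast in between), and by induction along a shortest path a vertex $w$ can broadcast at most $\dist(u,w)\le d$ times (it can fire once before, once between consecutive firings of, and once after its predecessor on the path). Hence at most $d(n-1)\le dn$ broadcasts occur in the interval; since every timestep contains at least one broadcast, the interval has at most $dn$ timesteps, so the edge $(u,v)$ has frequency at most $dn$. Your distance-weighted potential $\Phi$ could be made to implement exactly this budget (each broadcast spends one unit of the broadcaster's budget $\dist(u,\cdot)$, and each timestep spends at least one unit), but as proposed $\Phi$ only decreases when a \emph{new} vertex fires, and without the per-vertex firing budget the repeated firings of already-seen vertices are unaccounted for, so the $dn$ bound does not follow from your sketch as written.
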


\begin{proof}
    Let $v$ be some vertex in the graph $\mathcal{G}$ that is active at timesteps $t$ and $t'$, where $v \notin A_i$, $\forall i \in [t + 1, t' - 1]$. Observe that each vertex $u \in N(v)$ can be active at most once in $A_{t + 1}, A_{t + 2}, \dots, A_{t' - 1}$. Further, given some vertex $u' \in N(u) \setminus \{ v \}$, $u'$ can be active at most twice in $A_{t + 1}, A_{t + 2}, \dots, A_{t' - 1}$, once before $u$ is active and once after.

    Now, consider some vertex $v'$ at a distance $\ell$ from $v$, and let $u_1, u_2, \dots, u_{\ell - 1}$ satisfy $(v, u_1) \in E$, $(u_{\ell - 1}, v') \in E$ and $(u_i, u_{i + 1}) \in E$, $\forall i \in [1, \ell - 2]$, where $U(\mathcal{G}) = (V, E)$. Further, let $\tau_{x}$ be the number of timesteps between $t + 1$ and $t' - 1$ in which the vertex $x$ is active. Note that $\tau_{u_1} = 1$, and $\tau_{u_2} \leq 2$. In general $\tau_{u_i} \leq \tau_{u_{i - 1}} + 1$ as the vertex $u_{i}$ can be active once before $u_{i - 1}$, once between each activation of $u_{i - 1}$, and once after. Therefore, $\tau_{v'} \leq \dist(v, v')$.

    As $\dist(v, v') \leq d$, $\forall v' \in V$, no vertex may be active more than $d$ times between $t$ and $t'$. Hence, as there are $n$ vertices in $V$, there can be at most $dn$ vertices activated between timesteps $t$ and $t'$, thus $t' - t \leq d n$. By extension, the frequency of $v$ is at most $d n$.
\end{proof}

\begin{corollary}
    \label{col:broadcast_exploration}
    Any broadcast network $\mathcal{G} = (V, E_1, E_2, \dots, E_T)$ with a diameter $d$ can be explored in $d n(2n - 3)$ timesteps.
\end{corollary}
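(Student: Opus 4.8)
The plan is to apply the machinery of Section~\ref{sec:main_results} directly, with the preceding lemma supplying the only new ingredient. That lemma establishes that in any broadcast network every edge $(u,v)$ has frequency at most $dn$, where $d$ is the diameter of $U(\mathcal{G})$ and $n = \vert V \vert$. In the language of Section~\ref{sec:main_results}, this is exactly the statement that every broadcast network is a $dn$-frequent temporal graph. So the first (and only substantive) step is to record this observation.

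Given this, the corollary follows by invoking the frequent-edge exploration bound with $f = dn$. By Lemma~\ref{lem:weakly_f_frequent}, any $f$-frequent temporal graph on $n$ vertices can be explored from any starting vertex in $f(2n-3)$ timesteps — the factor $2n-3$ coming from the tree-exploration bound of Lemma~\ref{lem:explorling_trees} applied to the minimum weight spanning tree of Theorem~\ref{thm:frequent_exploration} — so substituting $f = dn$ yields an exploration in $dn(2n-3)$ timesteps. One point worth checking is that a broadcast network is, by definition, a directed object: its edge set is $\{(v,u) : u \in N(v)\}$ and the acknowledgement condition is phrased in terms of the reverse edge $(u,v)$, so it is really a symmetric directed temporal graph and the appropriate reference is Corollary~\ref{col:weakly_f_symetric_frequent} rather than Lemma~\ref{lem:weakly_f_frequent}. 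Since that corollary gives the identical bound $f(2n-3)$, the conclusion is unaffected.

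I expect the main obstacle here to be essentially nonexistent: all of the genuine difficulty has already been absorbed into the frequency lemma — the delicate counting argument that a vertex at distance $\ell$ from $v$ can be active at most $\ell$ times between two consecutive activations of $v$, hence at most $d$ times, hence at most $dn$ activations in total occur in the gap. Once that bound is in hand, the passage to an exploration schedule is purely mechanical, a matter of plugging $f = dn$ into an already-proved theorem. The only thing I would double-check before committing is that the frequency lemma's hypotheses (both the broadcast constraint and the acknowledgement constraint) were genuinely used in deriving the $dn$ bound, so that the reduction is legitimate for the class as defined.
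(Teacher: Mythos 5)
Your proposal is correct and matches the paper's (implicit) argument exactly: the preceding lemma makes every broadcast network a $dn$-frequent temporal graph, and the bound $dn(2n-3)$ then follows by plugging $f = dn$ into the frequent-edge exploration result. Your observation that the directed, symmetric nature of broadcast networks makes Corollary~\ref{col:weakly_f_symetric_frequent} the more precise reference than Lemma~\ref{lem:weakly_f_frequent} is a sound refinement and does not change the conclusion.
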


\paragraph*{Always Connected Broadcast Networks}

We now consider the restriction to broadcast networks where the temporal graph is connected at each timestep. We show that these temporal graphs can be explored in $(\delta + 1)(2 n - 3)$ timesteps where $\delta = \min_{v \in V} \Delta(v)$, i.e. $\delta$ is the lowest degree of any vertex in the underlying graph, with $\Delta(v)$ denoting the degree of $v$ in the underlying graph.

For simplicity, in this section we use the term \emph{frequency} with reference to vertices, denoting the frequency of a vertex $v$ by $f_v$ and defining $f_v$ as the smallest value such that $\forall t \in [1, T + 1 - f_v]$, $v \in \bigcup_{t' \in [t, t + f_v - 1]} A_{t'}$, i.e. the smallest value such that $v$ is active at least once every $f_v$ timesteps. Note that this is analogous to the definition of edges and, indeed, in this model, the frequency of the edge $(v, u)$ is equal to the frequency of the vertex $v$.

\begin{lemma}
    \label{lem:degree_and_frequency_broadcast}
    Given an always connected broadcast network $\mathcal{G} = (V, E_1, E_2, \dots, E_t)$ with underlying graph $U(\mathcal{G}) = (V, E)$, the edge $e \in E$ has frequency $\delta + 1$, where $\delta = \min_{v \in V} \Delta(v)$.
\end{lemma}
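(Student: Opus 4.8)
The plan is to first reduce the statement to one purely about vertex activity. In a broadcast network the directed edge $(u,v)$ is present in timestep $t$ exactly when $u$ broadcasts, i.e.\ when $u \in A_t$, so the frequency of $(u,v)$ is precisely the vertex frequency $f_u$ in the sense introduced just before the lemma. Hence it is enough to show that in an always connected broadcast network every vertex is active at least once in every window of $\delta + 1$ consecutive timesteps. I would argue by contradiction: suppose some vertex $v$ is inactive on a window $W$ of $\delta + 1$ consecutive timesteps.

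The first ingredient combines the acknowledgement constraint with the inactivity of $v$. If a neighbour $x$ of $v$ were active at two timesteps $s_1 < s_2$ of $W$, then the edge $(x,v)$ active at $s_1$ and at $s_2$ would force $(v,x)$ active somewhere in $[s_1,s_2-1]\subseteq W$, i.e.\ $v \in A_{t'}$ for some $t' \in W$, a contradiction; so each neighbour of $v$ is active at most once within $W$. The second ingredient is always-connectedness: for each $t \in W$ the graph $(V,E_t)$ is connected, so $v$ is incident to an active edge, and since $v$ itself is inactive this forces some neighbour of $v$ to be active at $t$. Putting these together, the sets of active neighbours of $v$ over the $\delta+1$ timesteps of $W$ are nonempty and pairwise disjoint, so $v$ has at least $\delta + 1$ distinct neighbours; in particular $\Delta(v) \geq \delta + 1$, so $v$ is not a vertex of minimum degree.

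The core of the argument is a descent. Let $u$ be the neighbour of $v$ that is active in the first timestep of $W$; by the above $u$ is active only in that timestep of $W$, hence inactive throughout the remaining $\delta$-timestep sub-window $W'$. Now $v \in N(u)$ and $v$ is inactive on all of $W \supseteq W'$, so $v$ cannot serve as the active neighbour covering $u$ at any timestep of $W'$; rerunning the previous paragraph with $u,W'$ in place of $v,W$, the neighbours of $u$ \emph{other than} $v$ must supply a distinct active vertex for each of the $\delta$ timesteps of $W'$, so $\Delta(u) \geq \delta + 1$. Iterating, one builds a chain $v = w_0, w_1 = u, w_2, \dots$ with $w_{i+1} \in N(w_i)$, where $w_{i+1}$ is active at the $(i{+}1)$-st timestep of $W$ and inactive on the trailing window of length $\delta - i$, and where $w_0,\dots,w_{i-1}$ are all inactive on that trailing window; at level $i$ the number of distinct active neighbours that $w_i$ is forced to exhibit (timesteps still to cover, plus the ``blocked'' neighbours $w_0,\dots,w_{i-1}$) must not exceed $\Delta(w_i)$, and I would push the descent until this count exceeds $\delta$ — equivalently, until it collides with a vertex of minimum degree — to get the contradiction.

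The delicate point, and the step I expect to be the real obstacle, is making this descent actually terminate in a contradiction. Counting only one blocked neighbour per level merely yields $\Delta(w_i) \geq (\delta + 1 - i) + 1$, which is consistent for every $i$; so one has to show that the blocked neighbours accumulate along the chain — e.g.\ that the chain can be routed to stay inside $N(v)$, or that the earlier $w_j$ remain adjacent to $w_i$, or that a minimum-degree vertex is unavoidably encountered — so that the lower bound on $\Delta(w_i)$ stays at least $\delta+1$ while the trailing window, and hence the room to place distinct active neighbours, keeps shrinking, eventually exceeding what a vertex of degree $\delta$ can support. Getting the bookkeeping right about which vertices are provably inactive on which sub-window is where the work lies; throughout, the always-connectedness hypothesis is essential, since it is what forces every shrinking window to remain fully covered.
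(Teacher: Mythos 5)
Your first two paragraphs are sound and reproduce the paper's opening step: reducing edge frequency to vertex frequency, using the acknowledgement constraint to show that each neighbour of a vertex $v$ is active at most once in any window on which $v$ is inactive, and using always-connectedness to force one active neighbour of $v$ per timestep of that window. This yields exactly the local bound $f_v \leq \Delta(v) + 1$. But the lemma claims the global bound $f_v \leq \delta + 1$ with $\delta$ the \emph{minimum} degree over the whole graph, and your descent does not reach it --- as you yourself observe, beyond the first step only $w_{i-1}$ is guaranteed to be a blocked neighbour of $w_i$ (the earlier $w_j$, including $v$, need not be adjacent to $w_i$), so the count degrades to $\Delta(w_i) \geq \delta + 2 - i$, which is vacuous from $i = 2$ onwards. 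Since you explicitly leave closing this as ``where the work lies,'' the proof is incomplete at precisely the step that separates the lemma from the easy bound $f_v \leq \Delta(v) + 1$. A repair confined to a single window is also unlikely to exist: a vertex a few steps along the chain has no forced adjacency to $v$ or to earlier chain vertices, so no accumulation of blocked neighbours can be extracted from that one window alone.

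The missing idea --- and the route the paper takes --- is to propagate frequency bounds along edges of the underlying graph rather than within one window. The mutual acknowledgement condition makes the activations of adjacent vertices interleave: every neighbour of $u$ must be active between consecutive activations of $u$, and is active at most once strictly between them. The paper uses this to argue $f_v \leq \min(\Delta(v), \Delta(u)) + 1$ and, more generally, $f_v \leq f_u$ for adjacent $u, v$, and then spreads the bound $\delta + 1$ from a minimum-degree vertex to every vertex via connectivity of $U(\mathcal{G})$. To complete your argument you would need such an edge-wise comparison lemma in place of the descent; note that it needs some care, since the naive ``active at most once in between'' bound only gives $f_v \leq 2 f_u - 1$ directly, so the interleaving has to be exploited more tightly (as the paper's own, rather informal, second step does) to obtain the stated $\delta + 1$.
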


\begin{proof}
    Consider some vertex $v \in V$ that is active in timestep $t$. In order for $v$ to be connected, there must, at each time step, be some vertex $u \in N(v)$ that is active, or $v$ itself must be active. As each vertex in $N(v)$ can be active at most once before $v$ is active, $v$ must be active in at least one timestep in $[t + 1, t + 1 + \Delta(v)]$. Thus $v$ has a frequency of $\Delta(v) + 1$.
    Now, consider some pair $(v, u)$ such that $(v , u) \in E$. As $v$ has a frequency of at most $\Delta(v) + 1$ and $u$ has a frequency of at most $\Delta(u) + 1$, to satisfy the condition that $v$ is active at most once between activations of $u$ (and $u$ is active at most once between the activations of $v$), $v$ must also be active at least once every $\Delta(u) + 1$ timesteps. Hence $f_v \leq \min(\Delta(v), \Delta(u)) + 1$. Extrapolating this, we have that $f_v \leq \min_{u \in N(v)} \Delta(u) + 1$ and $f_{v} \leq \Delta(v) + 1$. Further, we have by the same arguments, that $f_v \leq \min_{u \in N(v)} f_u$. Expanding this across the temproal graph gives that $f_v \leq \min_{u \in V \setminus\{v \}} f_u \leq \min_{u \in V \setminus\{v\}} \Delta(u) + 1$. Hence, $f_v \leq \min_{u \in V} \Delta(u) + 1 = \delta + 1$.
\end{proof}

\begin{theorem}
    \label{thm:always_connected_broadcasts}
    Given an always connected broadcast network $\mathcal{G} = (V, E_1, E_2, \dots, E_t)$, $\mathcal{G}$ can be explored in $(\delta + 1)(2n - 3)$ timesteps, where $\delta = \min_{v \in V} \Delta(v)$.
\end{theorem}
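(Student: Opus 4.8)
The plan is to reduce directly to the generic machinery of Section \ref{sec:main_results}. An always connected broadcast network is in particular a symmetric directed temporal graph whose underlying graph $U(\mathcal{G})$ is connected (a union of connected snapshots is connected), so Lemma \ref{lem:degree_and_frequency_broadcast} applies and tells us that every edge $e \in E$ has frequency $f_e \le \delta + 1$. In other words, $\mathcal{G}$ is a $(\delta+1)$-frequent symmetric directed temporal graph, and the whole statement should follow by feeding this into the frequency-based exploration bound.

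First I would invoke Corollary \ref{col:weakly_f_symetric_frequent} with $f = \delta + 1$: any $f$-frequent symmetric directed temporal graph on $n$ vertices can be explored from any starting vertex in $f(2n-3)$ timesteps. Substituting $f = \delta + 1$ yields the claimed bound of $(\delta+1)(2n-3)$ timesteps. For concreteness one can unfold this along the lines of Theorem \ref{thm:frequent_exploration}: build $FW(\mathcal{G})$, take a minimum weight spanning tree $T$ of it (of weight at most $(\delta+1)(n-1)$, since $T$ has $n-1$ edges each of weight at most $\delta+1$), take the length-at-most-$(2n-3)$ walk exploring $T$ guaranteed by Lemma \ref{lem:explorling_trees}, and convert it to a temporal walk; each of the at most $2n-3$ edge traversals costs at most $\delta+1$ timesteps (at most $\delta$ timesteps waiting for the edge to become active, plus one to traverse it), for a total of at most $(\delta+1)(2n-3)$.

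The one point that needs care is the bookkeeping on the constant: the raw bound $2F$ of Theorem \ref{thm:frequent_exploration} would only give $2(\delta+1)(n-1) = (\delta+1)(2n-2)$, so the argument must be routed through the sharper count in Lemma \ref{lem:weakly_f_frequent} and Corollary \ref{col:weakly_f_symetric_frequent}, which exploits that a spanning tree on $n$ vertices admits an exploring walk of only $2n-3$ edges. There is no substantive obstacle beyond Lemma \ref{lem:degree_and_frequency_broadcast} itself: the always connected hypothesis has already been consumed there to force the $\delta+1$ frequency on every edge, and everything after that is a direct appeal to the earlier results.
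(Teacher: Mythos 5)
Your proposal is correct and follows essentially the same route as the paper, which proves the theorem by combining Lemma \ref{lem:degree_and_frequency_broadcast} (every edge has frequency at most $\delta+1$) with the $f$-frequent exploration bound of Theorem \ref{thm:frequent_exploration} and Lemma \ref{lem:weakly_f_frequent}. Your extra care in routing through the $(2n-3)$-edge tree walk (rather than the raw $2F$ bound) and in invoking the symmetric directed variant is exactly the bookkeeping the paper's one-line proof implicitly relies on.
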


\begin{proof}
    Follows from Theorem \ref{thm:frequent_exploration}, Lemma \ref{lem:weakly_f_frequent}, and Lemma \ref{lem:degree_and_frequency_broadcast}
\end{proof}

\section{Conclusion}

In this paper, we have studied the set of temporal graphs defined by having \emph{frequent edges}, both in the general case and in several motivating settings, including public transport networks and the broadcast communication model. The primary result of this paper is that any such graph with $n$ vertices, where each edge is active at least once every $F$ timesteps, can be explored in $F(2n - 3)$ timesteps.

There are several obvious open questions following this paper. First is if the lower and upper bounds for exploration of temporal graphs with frequent edges can be closed, either through improving the exploration algorithm, or by constructing a stronger worst case example.

Another interesting direction is the question of whether the current results for broadcast networks are optimal. We conjecture that there exists an upper bound of the form $O(n d^k)$ for some constant $k$. Alternatively, lower bounds for this setting would be of great interest. Further, it is of interest as to if these techniques can be applied to other settings.

\bibliographystyle{fundam}
\bibliography{bib}

\end{document}